\newcommand{\ceil}[1]{ {\lceil#1\rceil}}
\newcommand{\bi}{{\{0,1\}}}
\algrenewcommand\algorithmiccomment[1]{\textcolor{lightgray}{\hfill // #1}}
\pgfplotsset{compat=1.18}
\theoremstyle{definition}
\newtheorem{theorem}{Theorem}
\newtheorem{lemma}{Lemma}
\newtheorem{remark}{Remark}
\newcommand{\blue}[1]{#1}
\newcommand{\bbF}{\mathbb{F}}
\newcommand{\bfc}{\mathbf{c}}
\newcommand{\bfd}{\mathbf{d}}
\newcommand{\bff}{\mathbf{f}}
\newcommand{\bfm}{\mathbf{m}}
\newcommand{\bfp}{\mathbf{p}}
\newcommand{\bfr}{\mathbf{r}}
\newcommand{\bfs}{\mathbf{s}}
\newcommand{\bfu}{\mathbf{u}}
\newcommand{\bfw}{\mathbf{w}}
\newcommand{\cC}{\mathcal{C}}
\renewcommand*{\thefootnote}{\fnsymbol{footnote}}
\begin{document}
\title{Secure Information Embedding in Forensic 3D Fingerprinting}
 
\author{
    {\rm Canran Wang$^*$},
    {\rm Jinwen Wang$^*$},
    {\rm Mi Zhou},
    {\rm Vinh Pham},
    {\rm Senyue Hao},\\
    {\rm Chao Zhou},
    {\rm Ning Zhang},
    {\rm and Netanel Raviv}\\ 
    Washington University in St. Louis
}

\pagestyle{empty}

\newcommand\bsub[1]{\vspace{3pt}\noindent\textbf{#1}}
\newcommand\isub[1]{\vspace{3pt}\noindent\textit{#1}}

\maketitle
 
\renewcommand*{\thefootnote}{}
\footnote{$^*$These authors contributed equally to this work.}

\renewcommand*{\thefootnote}{\arabic{footnote}}
\setcounter{footnote}{0}

\begin{abstract}

    Printer fingerprinting techniques have long played a critical role in forensic applications, including the tracking of counterfeiters and the safeguarding of confidential information. 
The rise of 3D printing technology introduces significant risks to public safety, enabling individuals with internet access and consumer-grade 3D printers to produce untraceable firearms, counterfeit products, and more.
This growing threat calls for a better mechanism to track the production of 3D-printed parts.

Inspired by the success of fingerprinting on traditional 2D printers, we introduce SIDE (\textbf{S}ecure \textbf{I}nformation Embe\textbf{D}ding and \textbf{E}xtraction), a novel fingerprinting framework tailored for 3D printing.
SIDE addresses the adversarial challenges of 3D print forensics by offering both secure information embedding and extraction.
First, through novel coding-theoretic techniques, SIDE is both~\emph{break-resilient} and~\emph{loss-tolerant}, enabling fingerprint recovery even if the adversary breaks the print into fragments and conceals a portion of them.
Second, SIDE further leverages Trusted Execution Environments (TEE) to secure the fingerprint embedding process.

\end{abstract}

\section{Introduction}\label{section:introduction}

3D printing is revolutionizing the consumption and distribution of goods, but also introduces unprecedented security risks that are absent in traditional 2D printing.
With internet access and commercial 3D printers, individuals can fabricate untraceable firearms (\emph{ghost guns}~\cite{Yahoo,mcnulty2012toward}) and other illicit items, with little to no technical expertise.
For example, one such weapon was implicated in the recent killing of Brian Thompson, the CEO of UnitedHealthcare~\cite{gibbons2024when, greenberg2024ghost} (Fig.~\ref{fig:ghostgun}).
To assist authorities and law enforcement in addressing these threats, forensic techniques offer a promising path forward.

\begin{figure}
	\centering
	\begin{subfigure}[b]{0.24\textwidth}
		\includegraphics[height=2.8cm]{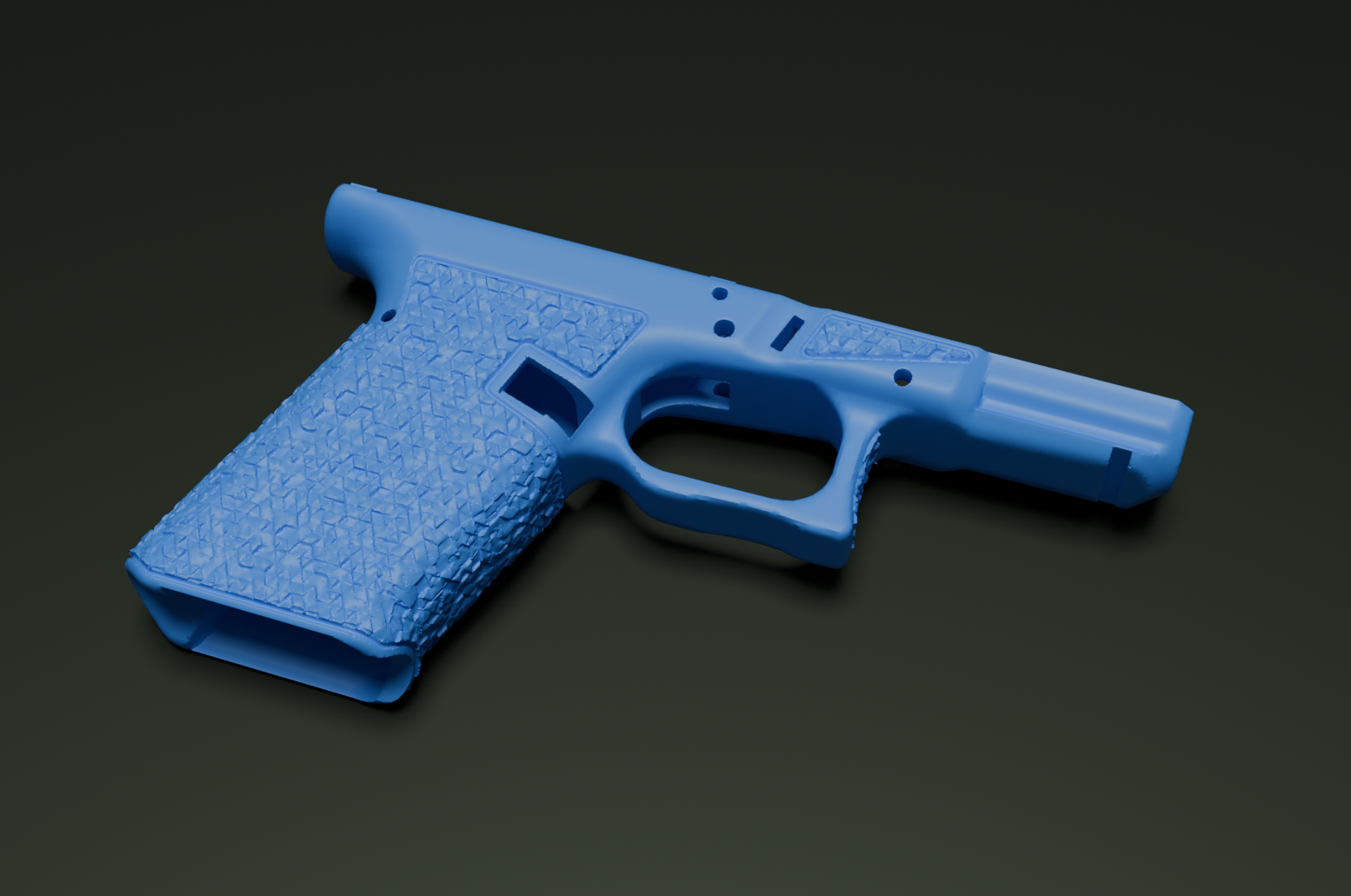}\caption{}\label{fig:g19}\end{subfigure} 
        \hfill
	\begin{subfigure} {0.22\textwidth}
		\includegraphics[height=2.8cm]{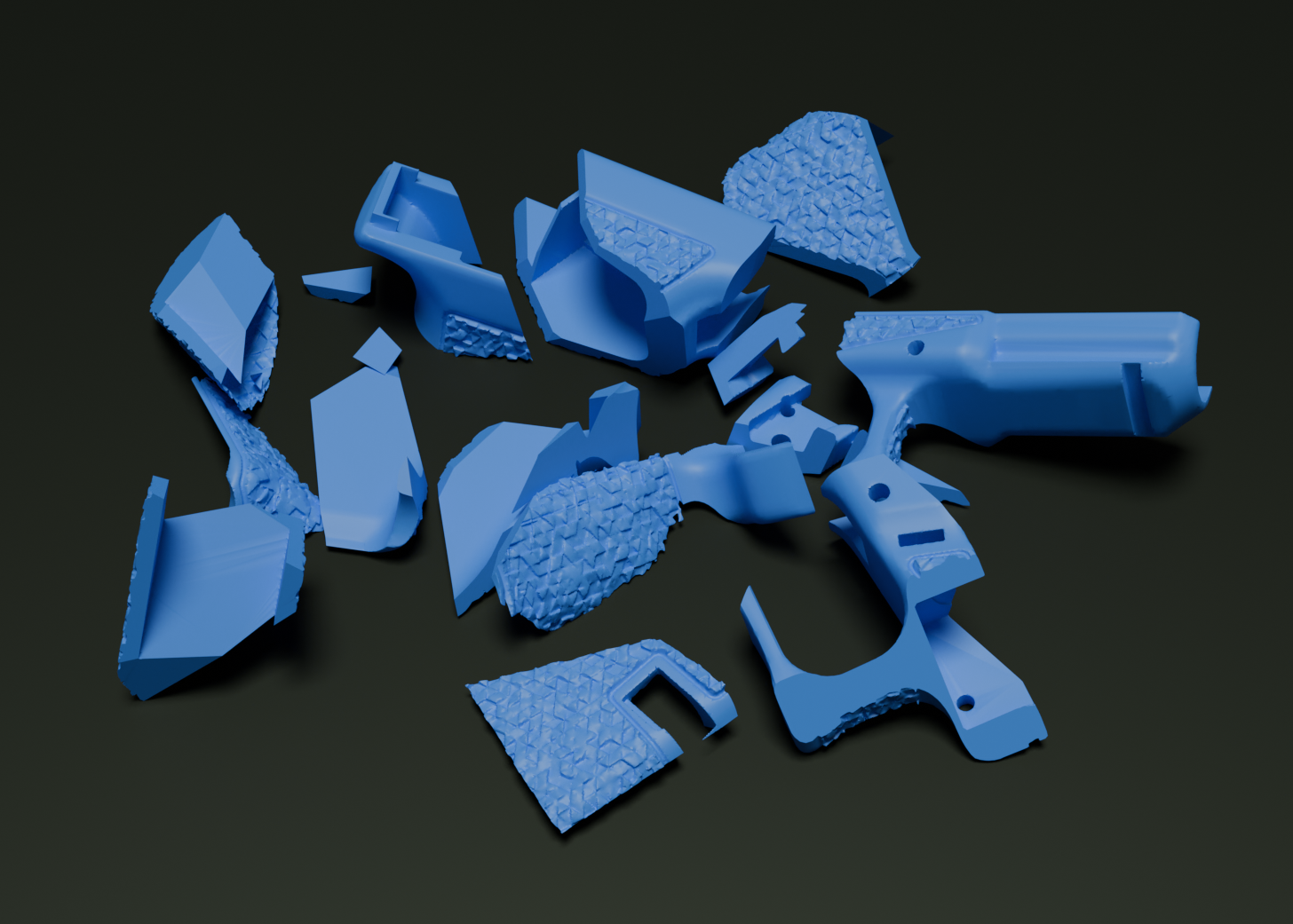}\caption{}\label{fig:g19fractured}
	\end{subfigure}
	\caption{(a) Renderings of a Glock 19 frame design. (b) Fingerprinting fails when the frame is broken, with portions missing.}
 
    \label{fig:ghostgun}
\end{figure}

\bsub{Existing Fingerprinting Solutions: }
Fingerprinting is a widely employed forensic technique that embeds uniquely traceable data into printed documents, e.g. timestamps, geolocations, and printer IDs.
These embedded fingerprints can subsequently be extracted to trace the perpetrator.
In the realm of 2D printer fingerprinting, existing approaches generally fall into two broad categories:~\emph{active} and~\emph{passive} methods.
Active methods involve deliberately placing invisible markers (e.g., a grid of dots~\cite{schoen2005investigating} or traits from modulated laser intensity~\cite{chiang2006extrinsic}).
Passive methods, on the other hand, rely on inherent variations of individual printers, including imperfections or speed fluctuation patterns in specific printer components~\cite{ali2003intrinsic}. 
Extending these concepts to 3D printing, current fingerprinting efforts have explored a variety of techniques, e.g., embedding tags on the object surface by varying layer thickness~\cite{delmotte2019blind} or printing speed~\cite{elsayed2021information}, altering layer material~\cite{salas2022embedding}, inserting cavities~\cite{suzuki2017embedding}, embedding RFID tags~\cite{voris2017three} or QR codes~\cite{wei2018embedding, chen2019embedding}, and inserting acoustic barcodes using surface bumps~\cite{harrison2012acoustic}. 
In addition, some studies focus on intrinsic signatures unique to 3D printers, including characteristics of stepper motors~\cite{li2018printracker} and heat systems~\cite{gao2021thermotag}.
Despite the comprehensiveness of vectors investigated in previous studies, there is little focus on the resiliency of the fingerprint against an active adversary, who may tamper with the embedding software or destroy the fingerprint by breaking the printed tools.

\begin{figure}[!t]
    \centering
\includegraphics[width=0.4\textwidth]{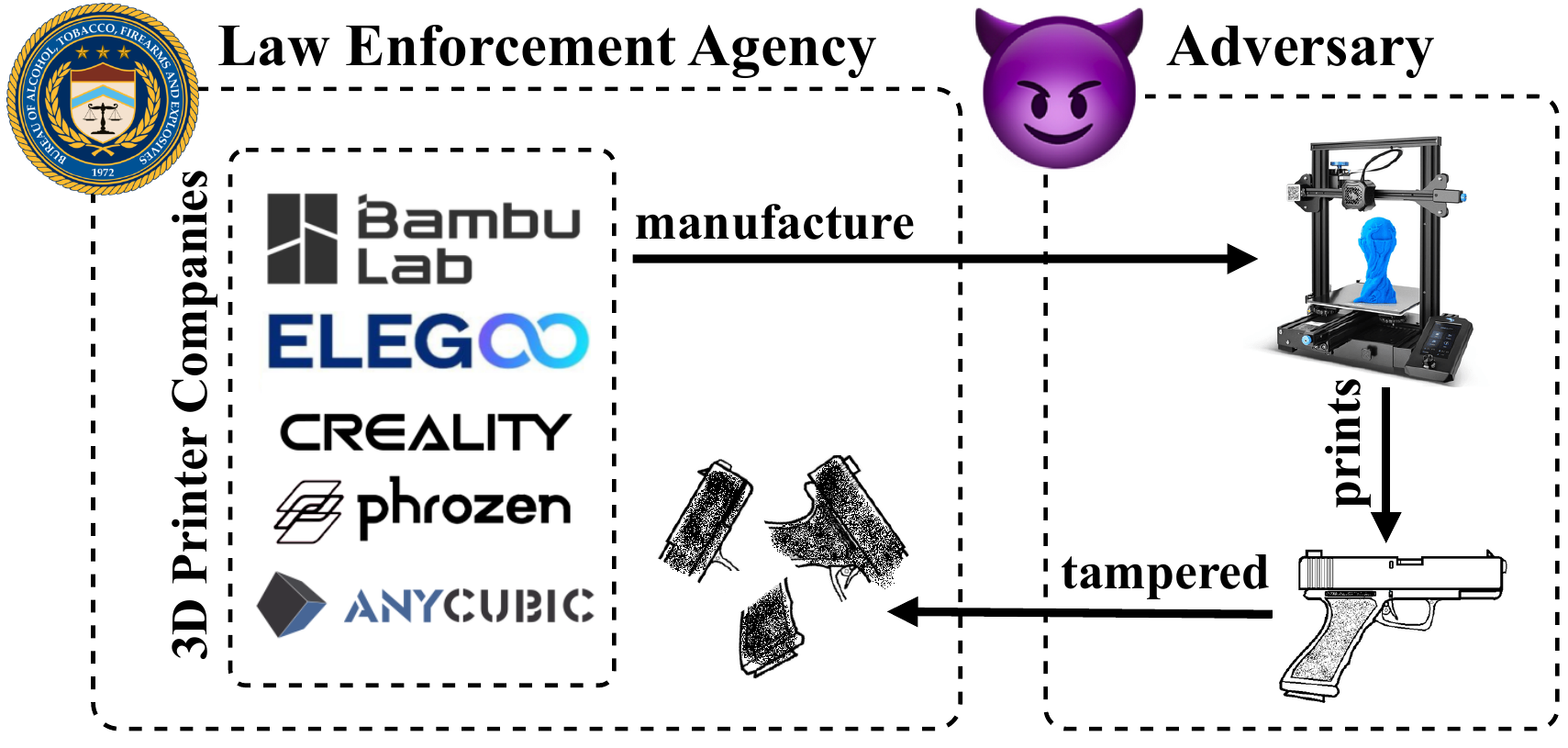}
    \caption{Law Enforcement Agencies Fail Fingerprinting}
    \label{fig:law-vs-adversary}
\end{figure}

\bsub{Resilient 3D Fingerprinting Solution under Adversaries: } 
To protect the integrity of 3D printing in adversarial scenarios, we proposed SIDE, a framework for \textbf{S}ecure \textbf{I}nformation Embe\textbf{D}ding and \textbf{E}xtraction, designed to enhance the resiliency of the fingerprint.
This necessitates cyber-physical protection of the fingerprint in both the physical printed part and the digital process that embeds the fingerprint. In this work, SIDE tackles two key technical challenges. 

\isub{Preventing Information Corruption: }
An adversary who is fully aware of the embedding method can tamper with the embedded fingerprint by breaking the object and hiding some of the fragments, hindering the extraction of the embedded information. 
Existing fingerprinting solutions~\cite{roth2006coding}, which embed raw information into objects, cannot defend against such adaptive adversaries due to their lack of resilience to missing information. 
To address this challenge, we propose 
$\alpha$-\emph{break-resilient codes} ($\alpha$-BRC), a family of coding-theoretic techniques designed for information extraction from 3D prints, with~$\alpha$ serves as a security parameter. 
These codes improve robustness by allowing successful extraction of embedded bits under fragmentation and partial loss, i.e., even when the print is adversarially broken and some fragments are missing up to the specified security threshold.

\isub{Preventing Fingerprint Embedding Tempering: }
An adversary may also attempt to tamper with the processing that embeds the fingerprint into the physical part. For example, if the fingerprint is embedded by the slicing software, then an attacker can attempt to remove the fingerprint by tampering with G-code instructions. As a result, to maximize resiliency against digital attacks, SIDE inserts the fingerprint embedding subsystem into the process immediately before physical printing in the 3D printing pipeline. 
By leveraging a TEE, SIDE ensures the integrity of the embedding process that translates the digital fingerprint to its physical representation, even against a strong attacker who can compromise the OS kernel.

\bsub{Prototype and Evaluations: }
We implemented and tested a prototype of SIDE, on a Creality Ender 3 3D FDM printer controlled by a Raspberry Pi 3B board, with OP-TEE V3.4 as secure world OS, and extend the results to an Elegoo Mars 4 SLA printer.
To evaluate the efficiency and effectiveness SIDE, we measured both the runtime and memory overhead introduced from SIDE, and the fingerprint extraction success rate.
The latter was measured under adversarial scenarios, with a combination of real-world and simulation-based experiments, with the help of a Leica S9D microscope and optical coherence tomography (OCT) devices. 
In summary, we make the following contributions.

\begin{itemize}
    \item We propose a secure 3D fingerprinting mechanism named SIDE that improves resiliency to adversarial operations, including malicious manipulation on both the information embedding procedure and the 3D prints. 
    
    \item We design and implement coding and decoding mechanisms with break-resiliency and loss-tolerance properties to defend against adversaries attempting to tamper with the fingerprinting information on 3D prints. 
    To the best of our knowledge, practical codes with these specific properties have not been studied before. Additionally, we develop a trusted execution environment for the fingerprinting embedding process to protect its integrity against software attacks.

    \item We implemented a prototype of SIDE on the Creality Ender 3 3D printer controlled by a Raspberry Pi 3B board, and evaluated the efficiency and effectiveness via a combination of real-world and simulation-based experiments on 3D objects with varying shapes and sizes.
        
\end{itemize}

\section{Background}\label{preliminary}

\bsub{3D Printing: }
Additive manufacturing, colloquially referred to as 3D printing, has emerged as a revolutionary technology with profound implications across various industries~\cite{yu2023xcheck}.
In contrast to the traditional~\emph{subtractive} manufacturing during which materials are consecutively removed from the workpiece, 3D printing refers to an~\emph{additive} process of creating a physical object and is typically done by laying down many thin layers of material in succession. 

Numerous technologies have been developed for 3D printing.
By and large, they differ by which material is in use and how layers are formed.
A layer can either be formed by using a nozzle that deposits molten thermoplastics while shifting back and forth on a surface, e.g. Fused Deposition Modeling (FDM), by depositing a layer of liquid polymers and curing it by ultraviolet light exposure, e.g., Stereolithography (SLA), or by binding powdered material using high-energy laser beams, e.g., Selective Laser Sintering (SLS).
 
The additive manufacturing process begins by converting a given 3D model into discrete 2D diagrams using a~\emph{slicer} software.
Each diagram represents a planar cross-section of the model along the printing direction at a certain height.
Then, the slicer creates a series of machine commands to instruct the printer about how to produce the corresponding layers of the diagram sequence.

In this research, we focus on printers based on both FDM and SLA technologies for their prevalence in commodity 3D printers.
The commands for FDM printers, called~\emph{G-Codes}\footnote{Note that there exist other languages for controlling 3D printing hardware, including variants of G-codes and propriety ones. 
Nevertheless, we collectively refer to those as G-codes for clarity.}, include the nozzle movement along the~$x$,~$y$, and~$z$ axes, the extrusion of material, and the temperature of nozzle/bed.
The commands for SLA printers may include the movement of print platform and the exposure time of each layer.
Other aspects based on the printer's capabilities may also be specified in these commands.

A typical FDM printer involves four stepper motors, which are actuators that rotate in discrete angular steps of a constant degree.
Three of the motors control the nozzle movement in the Cartesian space, and one is responsible for filament extrusion.
However, a G-code command specifies only the expected action of the printer hardware in a relatively high level, while the low-level implementation is not addressed.
For example, the command 
\begin{equation*}
    \texttt{G1 X98.302 Y96.831 E15.796 F400}
\end{equation*}
merely instructs the printer to move its nozzle from its current position to location~$(x,y)=(98.302,96.831)$ (with its position relative to the~$z$ axis unchanged), and simultaneously extrude~$15.796$ millimeters of molten thermoplastic filament, at a feed rate (speed) of 400 mm/min.
Completing this operation requires a series of~\emph{stepping events}, and each of them defines the exact timing and direction to trigger one of the four stepper motors for a single angular step.

The printer~\emph{firmware} bridges between the G-code and the printer's hardware, translating commands into precise actuator movements that drive the printing process.
The translation process of a firmware is non-trivial, and has a significant impact on print quality.
For instance, a sudden jerk of the printer nozzle may lead to uneven deposition of print material, compromising or even failing a print.
In contrast, a nozzle movement with smooth velocity change is generally preferred.

\bsub{Trusted Execution Environment: }
Trusted Execution Environments (TEEs) have emerged as a reliable solution for isolating sensitive and critical operations from untrusted software stacks in computing systems. 
Utilizing hardware and/or software isolation mechanisms, ARM TrustZone provides a secure execution space, known as \textit{the secure world}, for trusted software stacks. 
User space applications within this environment are commonly referred to as~\emph{Trusted Applications} (TAs). 
The code and data of secure software in the secure world are protected from access or tampering by untrusted software stacks, which operate in the Rich Execution Environment (REE), also known as~\emph{the normal world}.
Benefiting from diverse TEE solutions provided by CPU vendors for different architectures, such as ARM TrustZone, Intel Software Guard Extensions (SGX), and RISC-V Keystone, applications across various domains have been effectively protected. These domains include access control~\cite{krawiecka2018safekeeper,schwarz2020seng,djoko2019nexus}, cloud computing~\cite{baumann2015shielding,schuster2015vc3,poddar2018safebricks,hunt2020telekine,mo2021ppfl,wang2018interface}, and real-time systems~\cite{wang2022rt,pinto2019virtualization,wang2023ari,wang2023secure,wang2023ip}.

\section{Threat Model}\label{section:threatmodel}

We assume that the adaptive attacker is fully aware of SIDE's embedding and extraction schemes and attempts to bypass the fingerprinting mechanism through either the cyber vector or the physical vector.
Physically, we assume the attacker can kinetically break apart the printed object to corrupt the embedded fingerprint information.
Furthermore, the attacker is able to conceal some but not all of the fragments from the forensic investigation.
This is since physically eliminating all traces of evidence often requires significantly greater domain expertise and specialized training. 
Moreover, we do not believe it is possible to forensically link a printed object to a specific printer if all identifiable fragments can be completely concealed, especially given that physical evidence often plays an important role in criminal trials. 
Digitally, we assume the attacker is capable of modifying the files on the file system, and leveraging exploitation tools to escalate privilege in the system, compromising the rich execution environment (e.g., normal world).
However, the attacker cannot forge signatures for secure boot, and the secure software in TEE is free of vulnerability and can be trusted.

We further assume that attackers capable of building their own 3D printing devices, or capable of purchasing untraceable hardware, are out of scope.
This is since these approaches often require significant expertise, incur substantial costs, and deteriorate the quality of the resulting print.
While SIDE certainly has limitations in defending against resourceful attackers with strong expertise in additive manufacturing, it significantly raises the level of sophistication, prior knowledge, and expertise required from the adversary in order to remain undetected after committing the crime. 

We further assume that hardware used by law enforcement during decoding, such as microscopes or computed tomography (CT) devices, is trusted and inaccessible to adversaries.
Additionally, we assume that the adversaries will not intentionally damage printer components like sensors and actuators, as doing so would degrade print quality.
The printer's processor is assumed to support a Trusted Execution Environment (TEE) and to be reliable.
Lastly, side-channel and denial-of-service (DoS) attacks are considered out of scope.

\section{Break-Resilient Codes}\label{section:BRC}

The secure information extraction feature of SIDE is attributed to the~$\alpha$-\emph{break-resilient codes} ($\alpha$-BRC) specifically developed for forensic fingerprinting purposes, in which~$\alpha$ serves as a security parameter.

The~$\alpha$-BRC includes an encoder which take a binary string~$\bfw\in\bi^k$, i.e., the fingerprint, as input, where
\begin{equation}\label{eq:kalm}
    k={(l-\alpha)\cdot m-1},
\end{equation}
with~$l$ and~$m$ being positive integers such that
\begin{equation}\label{eq:kalm2}
    l>\alpha+1~\mbox{and}~m\geq \ceil{2\log l}+2,
\end{equation}
and outputs a codeword~$\bfc\in\bi^n$.
For positive integers~$s$ and~$t$ satisfying  
\begin{equation}\label{eq:st-condition}
    4\cdot t + {{2s}/{(m+\ceil{\log m} +4)}}\leq 4\cdot \alpha,
\end{equation}
the codeword~$\bfc$ is both~$t$-\emph{break-resilient} and~$s$-\emph{loss-tolerant}.
Specifically, this means that even if
\begin{enumerate}
    \item ($t$-break-resiliency) $\bfc$ is broken into~$t+1$ fragments, and
    \item ($s$-loss-tolerance) some fragments are lost, totaling~$s$ bits,
\end{enumerate}  
then the~$\alpha$-BRC decoder can still recover~$\bfw$ from the remaining (unordered) subset of fragments, thereby  secure information extraction is guaranteed, resolving Challenge~2.

\begin{figure}[t]
    \centering
    \begin{subfigure}{0.4\textwidth}
        \centering
	\includegraphics[width=\textwidth]{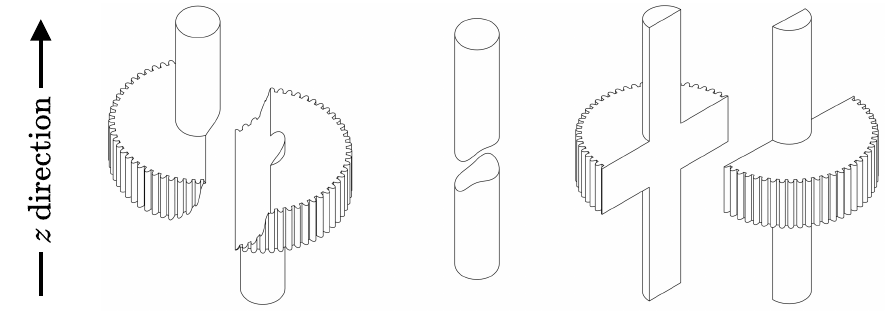}
	\caption{}\label{fig:breaks1} 
    \end{subfigure}
        \\
    \begin{subfigure}{0.43\textwidth}
    \centering
    \includegraphics[width=\textwidth]{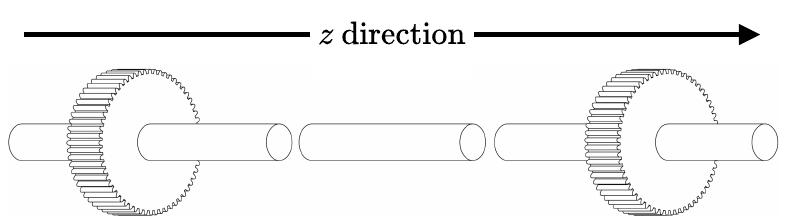}
    \caption{}\label{fig:breaks2}
    \end{subfigure}
    \caption{Fragments of a transmission shaft.
    Breaks in (a) cross multiple layers, and the assembly of fragments may be inferred from their overlapping bits.
    Breaks in (b) are perpendicular to the~$z$ direction, and the correct assembly (i.e., order) cannot be inferred from the fragments themselves.}
\end{figure}
 
\begin{remark}\label{remark:type-of-breaks}
SIDE encodes information into objects using physical elements such as variations in layer thickness.
With this setting, every break falls into one of two categories: those that cross multiple layers (Fig.\ref{fig:breaks1}) and those that do not (Fig.\ref{fig:breaks2}).
Breaks in the former category result in overlapping bits (i.e., bits which are shared between two or more fragments), potentially providing information that enables fragment assembly (see Section~\ref{section:preprocessing}).
Furthermore, if one fragment which contains shared bits is concealed, the shared bits remain accessible through other overlapping fragments.
Breaks in the former category, however, do not create shared bits, and if a fragment in Figure~\ref{fig:breaks2} is concealed, all of its 
embedded bits are omitted from law enforcement.

$\alpha$-BRC are thus designed to handle the worst-case scenario, where~$t$ represents the number of breaks that either produce no overlapping bits or cannot be resolved using overlapping bits, and~$s$ denotes the number of codeword bits entirely absent from all fragments confiscated by law enforcement.
\end{remark}

\subsection{Preliminaries}\label{section:preliminaries}
The following notions from coding theory are employed as basic building blocks for BRC.

\noindent\textbf{Systematic Reed-Solomon (RS) Codes}:
A special type of Reed-Solomon codes, which have been widely employed in communication systems and data storage applications; for an introduction to the topic see~\cite[Ch.~5]{roth2006coding}. 

For integers~$k$ and~$n$ such that~$n>k>0$, a systematic~$[n,k]$ RS code is a set vectors of length~$n$ called codewords, each entry of which is taken from~~$\bbF_q$, a finite field with~$q$ elements.
The first~$k$ entries of each codeword in a systematic RS code carry information in raw form, and the remaining~$n-k$ contain redundant field elements that are used for error correction.
Reed-Solomon codes are \textit{maximum distance separable} (MDS), a property which allows the recovery of a codeword after being corrupted by~$x$ errors (incorrect symbols with unknown locations) and~$y$ erasures (incorrect symbols with known locations), as long as~$n-k\geq 2x+y$.
Further, in this paper we focus on the binary field and its extensions, where~$q=2^z$ for some integer~$z\geq 1$, and any element in~$\bbF_q$ can be represented by a binary string of length~$z$.

\noindent\textbf{Run-length Limited Codes:}
A run-length limited (RLL) code has codewords in which the length of runs of repeated bits is bounded.
We employ the RLL code from~\cite[Algorithm 1]{levy2018mutually} in this paper for its simplicity of implementation.

\noindent\textbf{Mutually Uncorrelated Codes}:
A mutually uncorrelated (MU) code has the property that for every two (possibly identical) codewords, the prefix of one is not identical to the suffix of another.
As such, the codewords of a MU code do not overlap with each other when appearing as substrings of a binary string. 
MU codes have been extensively investigated in the past~\cite{Lev64,Lev70,Gilbert60,Bajic2004,Bajic2014,Chee2013,Bilotta2012,Blackburn2015,Wang2022, levy2018mutually}.
In this paper, we adopt a classic construction of MU code, in which each codeword starts with~$\ceil{\log k}$ zeros followed by a one, where~$k$ is the length of the (binary) information word.
The last bit is fixed to one, and the remaining bits are free from zero runs of length~$\ceil{\log k}+1$.

\noindent\textbf{Distinct Block Codes}.
Inspired by ideas from~\cite[Algorithm 1]{levy2018mutually} and~\cite[Algorithm 1]{elishco2021repeat}, we provide an encoding process that maps an input word to an array of distinct binary strings, and offer the inverse operation.
These procedures serve as an important
component in BRC.
Both algorithms, as well as proofs of their correctness, are given in Appendix~\ref{appendix:distinct-string}.

\begin{algorithm*}[t]\caption{\textsc{Encode} ($\alpha$-BRC Encoding)}\label{alg:Encode}
\begin{algorithmic}[1]
\Statex \textbf{Input:}~{An information word $\bfw\in\bi^k$, where~$k={(l-\alpha)\cdot m-1},$ and~$l,m$ are positive integers s. t.~$m\geq \ceil{2\log l}+2$.}
\Statex \textbf{Output:}~{A codeword $\bfc\in\bi^n$}, where~$n=l\cdot (m+\ceil{\log 
 m}+4) + \alpha\cdot(4m+11)$.
    \State Let~$\bfu\gets\bfp^{(0)}\circ\ldots\circ\bfp^{(\alpha-1)}\circ \bfw$, where~$\bfp^{(i)}$ is the binary representation of~$i\in[0,\alpha-1]$ using~$m$ bits.\label{line:prepending}
    \State Let~$\texttt{dStrings}\gets(\bfu_1,\ldots,\bfu_l)=\textsc{d-encode}(\bfu)$.\label{line:d-encode}
    \State Let~\texttt{next} be a key-value store with keys and values being elements in~$\bi^m$.
    \ForAll{keys~$\bfs$ in $\texttt{next}$ in ascending order}\label{line:defineNextStart}
        \State \textbf{if} there exist~$i\in[0,l-2]$ such that~$\bfs=\bfu_i$
                \textbf{then}~$\texttt{next}[\bfs]\gets\bfu_{i+1}$
        \textbf{else}~$\texttt{next}[\bfs]\gets\bfs$\label{line:defineNextEnd}
    \EndFor
    \State $\bfr_1,\ldots,\bfr_{4\alpha}\gets \textsc{rs-encode}((\texttt{next}[\bfp^{(0)}]\circ 0,\texttt{next}[\bfp^{(1)}]\circ 0\ldots,\texttt{next}[\bfp^{(2^{m-1})}]\circ 0),4\alpha)$,\label{line:rs-encode}
    \For{$i\in [0,\alpha-1]$}
        $\bfd_i \gets \bfr_{4i}\circ \bfr_{4i+1}\circ \bfr_{4i+2}\circ \bfr_{4i+3}$ 
    \EndFor
    \State $\bfc =\textsc{mu}(\bfu_0)\circ\textsc{rll}(\bfd_0) \circ\ldots        \circ\textsc{mu}(\bfu_{a-1})\circ\textsc{rll}(\bfd_{\alpha-1})\circ\textsc{mu}(\bfu_{a})\circ\ldots\circ\textsc{mu}(\bfu_{l-1})$\label{line:codeword}
    \State \textbf{return}~$\bfc$
\end{algorithmic}
\end{algorithm*}

\subsection{Encoding}\label{section:encoding}

The encoding procedure of~$\alpha$-BRC takes a binary string~$\bfw$ as input and outputs a codeword, and it is provided in Algorithm~\ref{alg:Encode}.
At high level,~$\bfw$ will be converted to a sequence of~\emph{distinct} binary strings with the method introduced in Appendix~\ref{appendix:distinct-string}, and their order will be recorded and then protected using a systematic Reed-Solomon code.
Synchronization issues among the symbols of the RS code will be resolved using the RLL and MU techniques mentioned earlier.

Let the information word be~$\bfw\in\bi^k$, and let~$\bfp^{(i)}\in\bi^m$ be the binary representation of integer~$i$.
The encoding of~$\bfw$ begins by prepending~$\bfw$ with $\bfp^{(0)},\bfp^{(1)},\ldots,\bfp^{(\alpha-1)}$, and as shown in line~\ref{line:prepending}, resulting in
\begin{align}\label{eq:u}
    \bfu=\bfp^{(0)}\circ\ldots\circ\bfp^{(\alpha-1)}\circ \bfw\in
    \{0,1\}^{\alpha m+k}\overset{\eqref{eq:kalm}}{=}
    \bi^{l\cdot m-1}.
\end{align}
Then, as shown in line~\ref{line:d-encode}, the resulting string~$\bfu$ is fed into the function~\textsc{d-encode} (Alg.~\ref{alg:djEncoding}, Appendix~\ref{appendix:distinct-string}) and mapped to an array of~$l$~\emph{pairwise-distinct} binary strings of length~$m$, i.e.,
\begin{equation}\label{eq:dStrings}
    \texttt{dStrings}=(\bfu_0,\ldots,\bfu_{l-1}),
\end{equation}
where~$\bfu_i\ne\bfu_j$ for all distinct~$i$ and~$j$ in~$\{0,1,\ldots,l-1\}$.

Note that due to the implementation of~\textsc{d-encode}, the first~$\alpha$ elements of~\texttt{dStrings} remain intact, i.e.,~$\bfu_i=\bfp^{(0)}$ for all~$i<a$, and they are referred as~\emph{markers}.
In the next step, a key-value store~\texttt{next} is defined to represent the~\emph{ordering} of elements in~\texttt{dStrings} as follows (line~\ref{line:defineNextStart}--\ref{line:defineNextEnd}).
For every key~$\bfs\in\bi^m$, the value~$\texttt{next}[\bfs]$ is defined as
\begin{equation*}
    \texttt{next}[\bfs]=\begin{cases}
      \bfu_{i+1}~&\text{if}~\bfs=\bfu_i~\text{for}~i\in[0,l-2],\\
      \bfs~&\text{otherwise.}
    \end{cases}
\end{equation*}

Note that the value~$\texttt{next}[\bfs]$ is well defined, since~$\bfu_i \ne \bfu_j$ for every~$i\ne j$ by the pairwise-distinct property of~\texttt{dStrings}.
It is also worth noting that the mapping from~\texttt{dStrings} to~\texttt{next} is injective; one may recover~\texttt{dStrings} from~\texttt{next} by observing every key~$\bfr$ such that~$\texttt{next}(\bfr)\neq\bfr$, and connecting every two~$\bfr_a,\bfr_b$ of them if~$\texttt{next}(\bfr_a)=\bfr_b$.

We proceed to the treatment of~$\texttt{next}$.
Since the values in~$\texttt{next}$ are binary strings of length~$m$, we append each of them with a~$0$, and hence they can be regarded as symbols in the finite field~$\bbF_{2^{m+1}}$.
They are sorted by their corresponding keys and fed into a systematic Reed-Solomon encoder, which then generates~$4\alpha$ redundancy strings~$\bfr_1,\ldots,\bfr_{4\alpha}\in\bi^{m+1}$ (line~\ref{line:rs-encode}).
Note that such encoding is feasible since the codeword length~$2^m+4\alpha$ is smaller than the number of elements in~$\bbF_{2^{m+1}}$\footnote{
RS codes requires the finite field size to be greater or equal to the length of the codeword, i.e.,~$2^{m+1}\geq 2^m+4\alpha$; this is the case due to~\eqref{eq:kalm2} and the fact that~$\alpha\ge 1$.}.

The codeword~$\bfc$ consists of two parts. 
The first region is called the~\emph{information region}, as it is generated from~$\bfu_\alpha,\ldots,\bfu_{l-1}$, which directly originate from the information word~$\bfw$.
The second region is called the~\emph{redundancy region}.
As the name suggests, it is made from the redundant bits generated from~\texttt{next}.

Define an encoding function~\textsc{mu} which maps~$\bfu_i\in\bi^{m}$ to a codeword~$\textsc{mu}(\bfu_i)\in\bi^{m+\ceil{\log m}+4}$ of a mutually-uncorrelated code~$\cC_\text{MU}$. 
The information region is hereby defined as
$$
\textsc{mu}(\bfu_{\alpha})\circ\ldots\circ\textsc{mu}(\bfu_{l-1})\in\bi^{(l-\alpha)\ldots(m+\ceil{\log m}+4)}.
$$
In addition, for~$i\in[0,\alpha-1]$, define
$$\bfd_i=\bfr_{4i}\circ \bfr_{4\alpha+1}\circ \bfr_{4i+2}\circ \bfr_{4i+3}\in\bi^{4m+4}$$
as the concatenation of four redundancy strings.

Then, let~\textsc{rll} be an encoding function that maps~$\bfd_i$ to a binary sequence~$\textsc{rll}(\bfd_i)\in\bi^{4m +11}$, called~\emph{redundancy packet}, which is free of zero runs longer than~$\ceil{\log m}+1$.
The redundancy region is then defined as
\begin{align*}
    \textsc{mu}(\bfu_0)\circ\textsc{rll}(\bfd_0)\circ
    \textsc{mu}(\bfu_{\alpha-1})&\circ\textsc{rll}(\bfd_{\alpha-1})\\&\in\bi^{\alpha\ldots(5m+\ceil{\log m}+15)}.
\end{align*}
Finally, the codeword~$\bfc$ is the two regions combined (line~\ref{line:codeword}):
\begin{align*}
        \bfc&=\textsc{mu}(\bfu_0)\circ\textsc{rll}(\bfd_0)
        \circ\ldots\circ\textsc{mu}(\bfu_{\alpha-1})\circ\textsc{rll}(\bfd_{\alpha-1})\\
        &\phantom{=}\circ\textsc{mu}(\bfu_{\alpha})\circ\ldots\circ\textsc{mu}(\bfu_{l-1})\in\bi^{l(m+\ceil{\log m}+4)+\alpha(4m+11)}.
\end{align*}

\begin{algorithm*}\caption{\textsc{Decode} ($\alpha$-BRC Decoding)}\label{alg:Decode}
\begin{algorithmic}[1]
\Statex \textbf{Input:}~{A multiset $\texttt{FRAGMENTS}$ of unordered and partially missing fragments of a codeword~$\bfc\in\cC$.} 
\Statex \textbf{Output:}~{The information word~$\bfw$ such that Algorithm~\ref{alg:Encode} with input~$\bfw$ yields $\bfc$.}
\State Let~$\bfr_i\gets\texttt{erasure}$ for all~$i\in[0,4\alpha-1]$.\label{line:defaultErasure}
\ForAll{codeword~$\textsc{mu}(\bfu_i)\in\cC_\text{MU}$ in the fragments in $\texttt{FRAGMENTS}$ and~$\bfu_i=\bfp^{(i)}$ for some integer~$i<\alpha$}\label{line:look4redundancy}
    \If{the number of bits after~$\bfm_i$ is less than~$4m+11$} continue to next MU codeword.\EndIf
    \State Let~$\bfm_i$ be the~$4m+11$ bits after~$\textsc{mu}(\bfu_i)$, and~$\bfd_i\gets \textsc{de-rll}(\bfm_i)$.\label{line:extractdi}
    \State $\bfr_{4i},\bfr_{4i+1},\bfr_{4i+2},\bfr_{4i+3}\gets\bfd_i[0,m],\bfd_i[m+1,2m+1],\bfd_i[2m+2,3m+2],\bfd_i[3m+3,4m+3]$\label{line:extractai}
\EndFor

\State Let~\texttt{approxNext} be a key-value store such that~$\texttt{approxNext}[\bfs]=\bfs$ for all~$\bfs\in\bi^m$.\label{line:approxNext}
\ForAll{fragment~$\bff=\bff_\text{start}\circ\textsc{mu}(\bfu_{u})\circ\ldots\circ\textsc{mu}(\bfu_{u+v})\circ\bff_\text{end}\in\texttt{FRAGMENTS}$ where~$u\ge \alpha$}
    \ForAll{$c\in[0,v-1]$}~$\texttt{approxNext}[\bfu_{u+c}]=\bfu_{u+c+1}$.\label{line:updateApproxNext}
    \EndFor
\EndFor
\State $\texttt{next}\gets\textsc{rs-decode}(\texttt{approxNext},\bfr_1,\ldots,\bfr_{4\alpha})$ \label{line:decodeRS}
\State Let~$\texttt{dStrings}=(\bfu_1,\ldots,\bfu_l)$ be an array of such that~$\texttt{next}(\bfu_i)=\bfu_{i+1}$.\label{line:obtaindstring}
\State $\bfu\gets\textsc{d-decode}(\texttt{dStrings})$
\State \Return $\bfu[\alpha\cdot m:]$\label{line:decodeReturns}
\end{algorithmic}
\end{algorithm*}

\subsection{Preprocessing}\label{section:preprocessing}
    Recall that breaks may cross multiple layers, resulting in overlaps between bit strings extracted from confiscated fragments.
    If a MU codeword is found in two bit strings, the two strings can be merged into one due to the uniqueness of MU codewords (i.e., a MU codeword appears at most once in the BRC codeword).
    This uniqueness arises from the pairwise distinct property of the elements in \texttt{dStrings}, as defined in~\eqref{eq:dStrings}.
     
    Hence, prior to BRC decoding, the bit strings from confiscated fragments undergo a~\emph{preprocessing} stage in which strings that share a MU codeword are merged.

\subsection{Decoding}\label{section:decoding}

Algorithm~\ref{alg:Decode} provides a procedure for extracting the information word~$\bfw$ from the~\emph{unordered} and~\emph{partially missing} fragments of the respective codeword~$\bfc$.
The crux of this procedure is to reconstruct the key-value store~\texttt{next} defined previously, and recover the information word~$\bfw$ from it.

Specifically, the decoding algorithm creates a key-value store~\texttt{approxNext}, which is slightly different from~\texttt{next}, using the information which appears in the confiscated fragments.
Alongside the correctly identified redundancy strings, \texttt{approxNext} goes through a Reed-Solomon decoding process and is corrected to~\texttt{next}.
Having the correct~\texttt{next} in hand, the correct~\texttt{dStrings}~\eqref{eq:dStrings} can be found since the mapping from the latter to the former is injective.
Then, $\texttt{dStrings}$ is fed into \textsc{d-decode} (Alg.~\ref{alg:djDecoding}, Appendix~\ref{appendix:distinct-string}), which is the inverse process of~\textsc{d-encode} (Alg.~\ref{alg:djEncoding}) to produce~$\bfu$~\eqref{eq:u}, whose suffix is the information word~$\bfw$.

In more detail, the decoding starts by distinguishing and decoding the discernible codewords of~$\cC_\textsc{mu}$ from the fragments.
Let~$\textsc{mu}(\bfu_i)$ be a discernible codeword in~$\cC_\text{MU}$ which fully resides within one fragment, where~$\bfu_i$ is its respective decoding. 
If~$\bfu_i=\bfp^{(i)}$ for some integer~$i<a$, it means that~$\bfu_i$ is a marker, and hence the~$(4m+11)$ bits after it consist of a redundancy packet (line~\ref{line:look4redundancy}).
This redundancy packet, if residing in the fragment, is passed to an RLL-decoder which yields four redundancy strings~$\bfr_{4i},\bfr_{4i+1},\bfr_{4i+2},\bfr_{4i+3}$ (line~\ref{line:extractdi}--\ref{line:extractai}).

The other discernible codewords of $\cC_\textsc{mu}$, i.e., those encoded from non-markers ($\bfu_i$'s for~$i\geq a$), are used to construct a key-value store \texttt{approxNext}.
Initially,~$\texttt{approxNext}(\bfs)=\bfs$ for every key~$\bfs\in\bi^m$ (line~\ref{line:approxNext}).
For each fragment~$\bff$, let~
$$\bff=\bff_\text{start}\circ\textsc{mu}(\bfu_{u})\circ\ldots\circ\textsc{mu}(\bfu_{u+v})\circ\bff_\text{end},$$
where~$\bff_\text{start}$ and~$\bff_\text{end}$ are the (possibly empty) prefix and suffix of~$\bff$ with no discernible codeword from~$\cC_\text{MU}$ that is encoded from a non-marker.
In line~\ref{line:updateApproxNext}, the decoder updates
$$\texttt{approxNext}[\bfu_{u+c}]=\bfu_{u+c+1},$$
for every~$c\in[0,v-1]$.
The above process stops once no more codewords in~$\cC_\text{MU}$ can be found.

The decoding algorithm proceeds to correct the constructed key-value store~\texttt{approxNext} to~\texttt{next}, i.e., the key-value store generated in Algorithm~\ref{alg:Encode} from~$\bfw$, using the collected redundancy strings and a standard Reed-Solomon decoder (line~\ref{line:decodeRS}).

Next, the array~\texttt{dStrings} is obtained from \texttt{next} (line~\ref{line:obtaindstring}), and the function~\textsc{d-decode} (Alg.~\ref{alg:djDecoding}, Appendix~\ref{appendix:distinct-string}) is employed to recover~$\bfu$.
Recall that~$\bfu=\bfp^{(0)}\circ\ldots\circ\bfp^{(a-1)}\circ \bfw$ in~\eqref{eq:u}, and hence the decoding procedure concludes by returning the~$k$ rightmost bits of~$\bfu$ (line~\ref{line:decodeReturns}).

Together, correct decoding is guaranteed by the following theorem, whose proof is provided in Appendix~\ref{appendix:proof-of-theorem}.
\begin{theorem}\label{theorem:brc}
Line~\ref{line:decodeReturns} of Algorithm~\ref{alg:Decode} returns correct information word~$\bfw$ if~$    4\cdot t + {{2s}/{(m+\ceil{\log m} +4)}}\leq 4\cdot \alpha$.
\end{theorem}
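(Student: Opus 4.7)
The overall approach is to reduce correctness of Algorithm~\ref{alg:Decode} to the decoding guarantee of the underlying systematic Reed--Solomon code. Concretely, I will view the pair (\texttt{approxNext}, $\bfr_1,\ldots,\bfr_{4\alpha}$) as a corrupted version of the RS codeword produced in line~\ref{line:rs-encode} of Algorithm~\ref{alg:Encode}, show that its number of symbol errors $E$ plus erasures $Y$ satisfies $2E+Y\leq 4t + 2s/(m+\lceil\log m\rceil + 4)$, and invoke the MDS property to conclude $2E+Y\leq 4\alpha$, so that \texttt{next} is recovered exactly.

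Before the counting I would dispose of two preliminary facts. By the mutually-uncorrelated property of $\cC_{\text{MU}}$ together with the pairwise-distinctness of the entries of \texttt{dStrings} produced by \textsc{d-encode}, every MU codeword occurs at most once in $\bfc$. Consequently the preprocessing of Section~\ref{section:preprocessing} merges any two bit strings sharing a discernible MU codeword unambiguously, and within every surviving fragment the decoder identifies exactly the MU codewords wholly contained in it, with neither false positives nor missed ones. In particular, no wrong (non-self-loop) value is ever written into \texttt{approxNext}; any entry that disagrees with \texttt{next} is a default self-loop that should have been a non-self-loop, which we treat uniformly as errors.

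The heart of the proof is the bound on $2E+Y$, which I would obtain by a case analysis on the location of each of the $t$ breaks and on the content of each lost fragment. For breaks: a break falling inside the encoding of a non-marker $\textsc{mu}(\bfu_i)$ makes $\bfu_i$ unrecognizable on both sides, so the two transitions $\bfu_{i-1}\to\bfu_i$ and $\bfu_i\to\bfu_{i+1}$ are absent from \texttt{approxNext}, contributing at most $2$ errors and hence cost $4$ to $2E+Y$; a break falling inside a marker or inside an RLL redundancy packet causes the length check after line~\ref{line:look4redundancy} to fail (or the marker itself to go unrecognized), so the four RS symbols of that packet remain erasures and contribute cost $4$; a break falling exactly between two consecutive MU codewords costs at most $2$. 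In all cases each break contributes at most $4$ to $2E+Y$, giving a total break contribution of at most $4t$. For losses: a lost fragment of $s_f$ bits contains at most $s_f/(m+\lceil\log m\rceil+4)$ wholly-missing MU codewords (since this is the MU-codeword length); by charging the two boundary transitions of each lost fragment to the adjacent breaks (already counted above) and charging each wholly interior lost non-marker MU codeword with the single outgoing transition it erases, and each wholly interior lost marker/packet with its four RS erasures, the amortized cost per bit of loss is $2/(m+\lceil\log m\rceil+4)$, for a total loss contribution of at most $2s/(m+\lceil\log m\rceil+4)$.

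Combining gives $2E+Y\leq 4t + 2s/(m+\lceil\log m\rceil+4) \leq 4\alpha$ by hypothesis, so the RS decoder in line~\ref{line:decodeRS} recovers \texttt{next} correctly. Then \texttt{dStrings} is reconstructed uniquely in line~\ref{line:obtaindstring} via the injective map \texttt{dStrings}$\mapsto$\texttt{next} observed in Section~\ref{section:encoding}, \textsc{d-decode} inverts \textsc{d-encode} (Appendix~\ref{appendix:distinct-string}) to give $\bfu = \bfp^{(0)}\circ\cdots\circ\bfp^{(\alpha-1)}\circ\bfw$, and the suffix returned in line~\ref{line:decodeReturns} is precisely $\bfw$. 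The step I expect to be the main obstacle is the amortization in the loss analysis: breaks and lost fragments share boundaries, and each boundary transition must be charged to exactly one of $4t$ and $2s/(m+\lceil\log m\rceil+4)$ to avoid double counting; verifying this clean split without inflating the coefficients is the technical heart of the argument.
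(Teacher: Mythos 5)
Your proposal is correct and follows essentially the same route as the paper's proof in Appendix~\ref{appendix:proof-of-theorem}: both reduce correctness to the Reed--Solomon errors-and-erasures condition for the word in~\eqref{eq:rs-cw} (twice the number of errors plus the number of erasures at most $4\alpha$), bound errors by unrecognized non-marker MU codewords and erasures by unobtainable marker/redundancy packets, and conclude via the injectivity of the map from \texttt{dStrings} to \texttt{next} and the inversion by \textsc{d-decode}. The only difference is bookkeeping: the paper splits the count by region into Lemmas~\ref{lemma:boundErasures} and~\ref{lemma:boundErrors} (with $t_1,t_2,s_1,s_2$), whereas you charge at most $4$ per break and at most $2/(m+\ceil{\log m}+4)$ per lost bit directly; the constants and the final chain of inequalities are identical.
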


\subsection{Trusted Information Embedding}

Figure~\ref{fig:tee-printing-process} shows the system design of trusted fingerprint embedding procedures.
To prevent attackers from tampering the fingerprint embedding procedures, SIDE executes the 3D fingerprint embedding procedure and dependencies in TEE, including fingerprint information encoding (\textsf{codec}), object model slicing (\textsf{layer-gen}), toolpath generation (\textsf{toolpath-gen}), and 3D printer driver (\textsf{firmware}).
TEEs are constrained by hardware resource limitations, notably their limited memory for the secure domain.
Consequently, attempting to execute the entire fingerprint embedding procedure na\"{i}vely within the TEE risks print failures due to insufficient secure heap space, which is inadequate to handle the size of 3D model files and the substantial intermediate data generated during the slicing process.
To mitigate this limitation, we propose a~\emph{progressive slicing} strategy, which reduces peak memory usage for procedures with the highest heap memory demand.
This approach leverages the inherent layer-by-layer nature of the 3D printing process.
Instead of pre-slicing the entire 3D model and generating a G-code file prior to printing, slicing is performed dynamically in an on-demand manner during the printing process.

\section{SIDE Implementation}\label{section:implementation}
This section details the implementation of SIDE, including bit embedding, extraction, and TEE integration. 

\subsection{Bit Embedding Method}\label{section:bit-embedding}
\begin{figure}[t]
    \centering
    \includegraphics[width=0.47\textwidth]{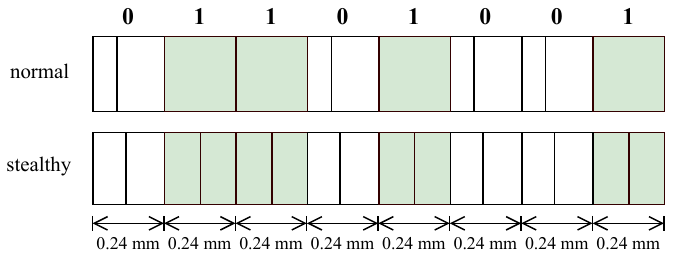}
    \caption{Demonstration of embedding~$01101001$ with parameters~$x=0.08$ and~$(y,\epsilon)=(0.12,0.02)$.
    In the normal settings, each~$0$ is represented by two layers of~$0.08$ and~$0.16$ millimeters, and each~$1$ is represented by one layers of~$0.24$ millimeters.
    In the stealthy settings, each~$0$ is represented by two layers of~$0.10$ and~$0.14$ millimeters, and each~$1$ is represented by two layers of~$0.12$ millimeters.
    In either case, the length required for embedding one bit is~$0.24$ mm.}
    \label{fig:settings}\label{fig:varyLayerWidth}
\end{figure}

We address bit embedding at two levels: normal and stealthy.
The normal approach focuses solely on the readability of the embedded bits, while the stealthy approach imposes an additional requirement of indiscernibility.

\noindent\textbf{Normal Embedding}.
In the proposed normal method, there are three layer thicknesses: $x$, $2x$, and $3x$, where $x$ is a base thickness.  
A~$0$ bit is represented by two consecutive layers of thickness $x$ and $2x$, respectively, and a~$1$ bit is represented by a single layer of thickness $3x$.
This method has three key advantages.
First, it improves the readability as the substantial difference between the layers minimizes confusion during bit reading.
Second, it provides consistent embedding density, defined as the number of bits embedded per unit distance.
Since both $0$ and $1$ are represented using the same total thickness~$3x$, the embedding density remains consistent for both~$0$ and~$1$.
This ensures that the required object height depends solely on the length of the codeword, rather than its content.
Third, it includes directional information, as the arrangement of layers for~$0$'s implies the direction of the codeword.

\noindent\textbf{Stealthy Embedding}.
The stealthy embedding method prioritizes indiscernibility by minimizing differences in layer thickness between layers.
In this method, every bit is represented by two layers.
Specifically, a $1$ is represented by two layers of~$y$ millimeters, and a~$0$ is represented by two layers of~$y-\epsilon$ and~$y+\epsilon$ millimeters, respectively.
The~$\epsilon$ shall be imperceptible to naked eyes but still discernible to bit extraction equipments.
This approach offers the added benefit of stealthiness, making it harder for the adversary to distinguish the embedded bits.
Yet, it demands higher resolution for accurate bit extraction.

An illustration of both embedding methods is given in Figure~\ref{fig:varyLayerWidth}.
In reality, the~$x,y$ and~$\epsilon$ are determined to accommodate the resolutions of the printer and the bit extraction equipment.

\begin{figure}
	\centering
        \includegraphics[width=0.43\textwidth]{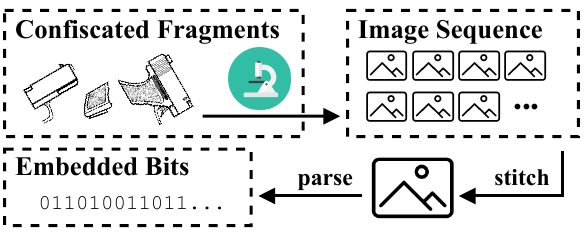}
    \caption{Procedure of bit extraction.}
    \label{figure:bit-extraction}
\end{figure}

\subsection{Bit Extraction Method}\label{section:bit-extraction}
The extraction of bits is performed by inspecting the thicknesses of the layers in the fragments using specialized equipment and determining the corresponding bits.
In this section, we describe an optical extraction method that employs a microscope to inspect the ridges on the surface of the print\footnote{If surface examination is not feasible (e.g., due to post-processing done on the surface), a computed-tomography based method can be utilized. However, it is beyond the scope of this paper}, with a graphical illustration given in Figure~\ref{figure:bit-extraction}.

Microscopes typically have a limited field of view, making it impossible to inspect an entire fragment at once.
To overcome this limitation, we mount the fragment on a motorized rail slider.
During the extraction process, the fragment slides over the microscope's field of view, while the microscope takes a series of pictures.
The pictures are taken so that every two consecutive pictures overlap, which allows us to fuse them together and obtain a picture of the entire fragment.
 
To automatically read bits from the stitched images, we developed \texttt{bit-parser}, a program capable of parsing bits in both normal and stealthy embedding settings.
In the case of normal embedding, \texttt{bit-parser} begins by identifying the layers representing~$1$'s; they are characterized by a single, thickest layer of~$3x$ millimeters, making them easily distinguishable from others.
The program then counts the layers between these thick layers, with each pair of consecutive layers corresponding to a~$0$.

For stealthy embedding, \texttt{bit-parser} examines every two consecutive layers.
If the first layer is thinner than the second by a specific threshold, the pair corresponds to a~$0$.
The program then counts the layers between the layer pairs representing~$1$'s, with each pair of consecutive layers corresponding to a~$1$.
This concludes the process of bit extraction.

\subsection{TEE Protected Embedding}\label{section:implementation-embedding}

\begin{figure}[t]
    \centering
    \includegraphics[width=0.43\textwidth]{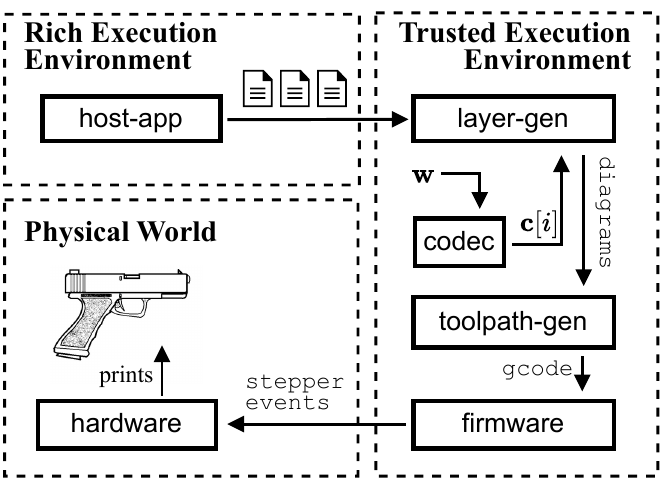}
    \caption{TEE Protected Embedding Procedure}
    \label{fig:tee-printing-process}
\end{figure}

Following the blueprint in Figure~\ref{fig:tee-printing-process}, SIDE involves a~\textsf{host-app} that runs in the normal world to serve as the frontend interface for printer users, and its backend, which consists of four functional modules (\textsf{codec},~\textsf{layer-gen},~\textsf{toolpath-gen}, and~\textsf{firmware}), that fully reside in the TEE.
The host bridges the normal world with the trusted world using C Foreign Function Interface (CFFI).
Upon receiving a 3D model from the user, it first cuts it into consecutive segments in the~$z$ direction, and passes them to SIDE backend one after the other.
The thickness of segments equals to the distance required to represent a bit determined by the bit embedding method.

The actual slicing, as well as the bit embedding, is executed in the backend.
The~\textsf{codec} module is designed to perform encoding, mapping an information word~$\bfw$ to a break-resilient codeword~$\bfc$; the details of the encoding process were given in Section~\ref{section:BRC} and implemented in Algorithm~\ref{alg:Encode}.
Recall that every bit~$\bfc[i]$ instructs the slicing of the corresponding segment.
If~$\bfc[i]=1$, then the segment is sliced to a layer of~$3x$ mm with normal embedding, or~$2y$ mm with stealthy embedding.
Otherwise, it is sliced to two layers of~$x$ mm and~$2x$ mm with normal embedding, respectively, or~$y-\epsilon$ mm and~$y+\epsilon$ mm with stealthy embedding
(see Section~\ref{section:implementation-embedding} for details).
The slicing is performed by the~\textsf{layer-gen} module.
For each layer, it generates the cross-sectional~\texttt{diagram}, and feed them to~\textsf{toolpath-gen} along with their corresponding heights, i.e., their distances to the printer bed.
With these inputs from~\textsf{layer-gen}, the~\textsf{toolpath-gen} generates nozzle toolpath (represented by G-code) used to manufacture these layers.
Both the~\textsf{layer-gen} module and the~\textsf{toolpath-gen} module are developed on top of t43~\cite{t43}.

Finally, the~\textsf{firmware} performs the parsing of G-code generated from~\textsf{toolpath-gen}.
It is a collection of core functionalities provided by Klipper, 
including the computation of precise nozzle movement and the generation of stepper events.
The stepper events are then converted to signals passed to the printer hardware.
This concludes the handling of $\bfc[i]$.
Upon finishing the slicing of a segment, the~\textsf{host-app} module is triggered to feed in the next layer segment, and the printing process is concluded after handling all segments.
Since the entire printing process is hidden in the trusted world and no intermediate data (e.g., a G-code file) is exposed to the user, the adversary is unable to strip off the embedded bits.

In prototyping our design, we employ a Creality Ender~3 3D printer, and a Raspberry Pi 3B board (with OP-TEE V3.4 support enabled in its ARMv8-A architecture, and Raspbian Linux 4.14.98-v7 installed for the normal world) to serve as the control board.
Our development is heavily based on the Klipper open-source project~\cite{Klipper} and t43~\cite{t43}.
The former is a 3D printer firmware known for offering high precision stepper movements offering support to printers with multiple micro-controllers, and is suitable for running on low-cost devices such as Raspberry Pi.
The latter is an open-source slicer program with basic functionalities and is purely written in C, making it suitable for trusted environments with limited language support.

\section{Evaluation}\label{section:evaluation}

This section presents a comprehensive evaluation of SIDE, focusing on fingerprint recovery, its impact on the printing process, and print quality.
Specifically, we provide: (1) Experiments and simulations assessing fingerprint recovery from broken prints.
(2) Analysis of the BRC code rate and the minimum object dimensions required for successful fingerprinting.
(3) Analysis the practicality of stealthy embedding.
(4) Print quality comparisons across normal embedding, stealthy embedding, and no embedding.
(5) Assessment of printer imperfections through analysis and experiments.
(6) Evaluation of TEE integration overhead and its effects on the printing process and print quality.

\subsection{Fingerprint Recovery}

We conducted experiments and simulations to validate fingerprint recoverability.
The experiments replicate the forensic information flow, including BRC encoding, bit embedding, fragmentation, bit extraction, and BRC decoding, demonstrating the core functionality of SIDE.
In simulations, we assess fingerprint recovery success under varying conditions, such as model size and shape, extent of fragmentation, and fragment loss.
The results confirm the robustness and practicality of SIDE for forensic fingerprinting applications.

\begin{figure}[t]
	\centering
	\begin{subfigure}{0.43\textwidth}
		\includegraphics[width=\textwidth]{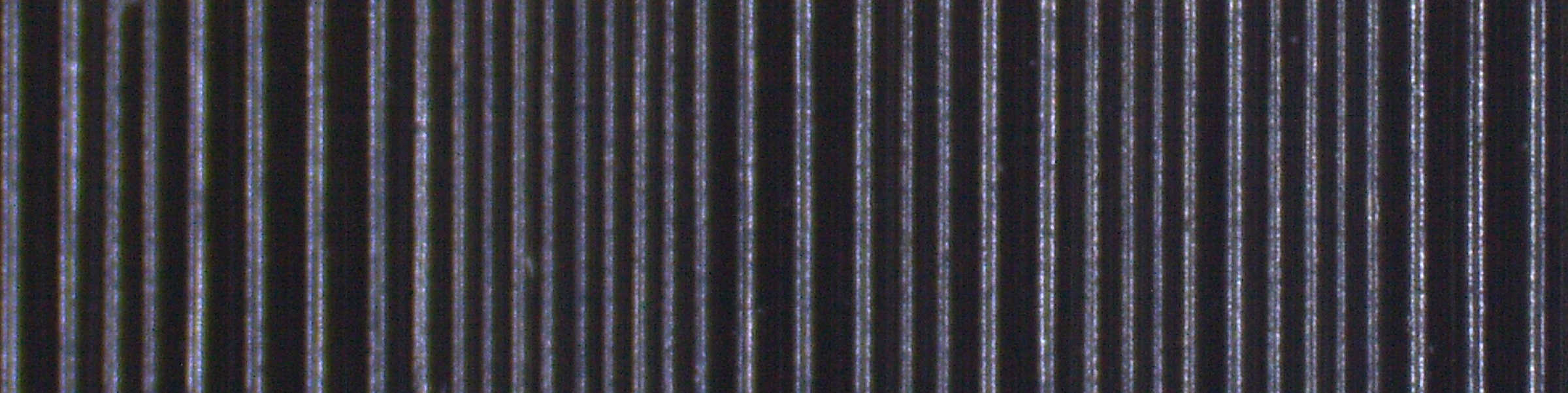}
		\caption{}\label{fig:reading-fdm}
	\end{subfigure}
        \centering
	\begin{subfigure}{0.43\textwidth}
		\includegraphics[width=\textwidth]{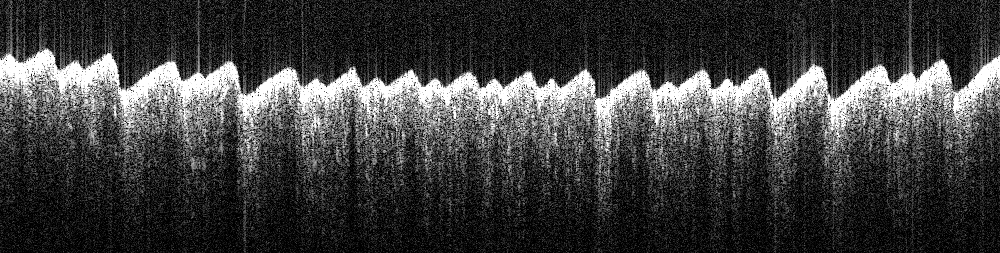}
		\caption{}\label{fig:reading-sla}
	\end{subfigure}
 	   \caption{The readings of (a) an FDM fragment using the Leica S9D microscope, in which every bright line is the center of a layer, and (b) an SLA fragment using OCT equipment, which shows the cross-section of fragment surface, in which a ridge represents a layer. Both fragments are printed with the normal embedding method.}
\end{figure}

\bsub{Real World Scenarios: }
We conduct experiments to verify the recoverability of fingerprints, providing a proof of concept for SIDE.
The experiments mimic the information flow in a forensic scenario, involving BRC encoding, bit embedding, fragmentation, bit extraction, and BRC decoding.
Specifically, we prepared a fingerprint of~$120$ bits, and encoded it into~$1$-,~$2$-, and~$3$-BRC codewords of~$281$,~$353$, and~$425$ bits, respectively.

The experiments were carried out using both FDM and SLA printers, employing the bit embedding method introduced in Section~\ref{section:bit-embedding}.
For the Creality Ender~3 FDM printer, we set~$x = 0.08$, while for the Elegoo Mars 4 SLA resin printer,~$x=0.04$.
These values were selected to balance information density (i.e., the number of bits embedded per unit length) with the resolution capabilities of the respective printers.

For each printer, we printed a cuboid of width~$6$mm and length~$20$mm, while the height is determined by the embedding method and codeword length.
The printed cuboids were then manually broken apart to the maximum allowance of fragmentation by the embedded BRC codeword.

The fragments from FDM printer are examined using the methods described in Section~\ref{section:bit-extraction} with the help of a Leica S9D microscope.
The fragments from SLA printer, however, are examined with an Optical Coherence Tomography (OCT) device, since their layer thickness are beyond the resolution of optical microscopes.
OCT is based on low-coherence interferometry to capture depth-resolved images with micro-level resolution in a non-invasive manner.
It provides fast 3D imaging and quantitative, layer-by-layer analysis.
Although widely used in biomedical and clinical diagnostics, OCT has also been adopted in non-biomedical fields such as industrial inspection~\cite{he2023robotic}, art conservation~\cite{liang2005face}, and geology~\cite{campello2014micro}.
To extract the embedding information from 3D prints, we built our customized~\emph{spectral domain} OCT (SD-OCT) system using visible light with an axial resolution of 1.9 $\mu$m.
 
Finally, the extracted bits were fed to the BRC decoder (Alg.~\ref{alg:Decode}).
In all cases, we achieved a 100\% success rate, i.e., the information word perfectly matched the output of the decoding algorithm.

\begin{figure}[t]
	\centering
	\begin{subfigure}{0.23\textwidth}
		\includegraphics[height=2.3cm]{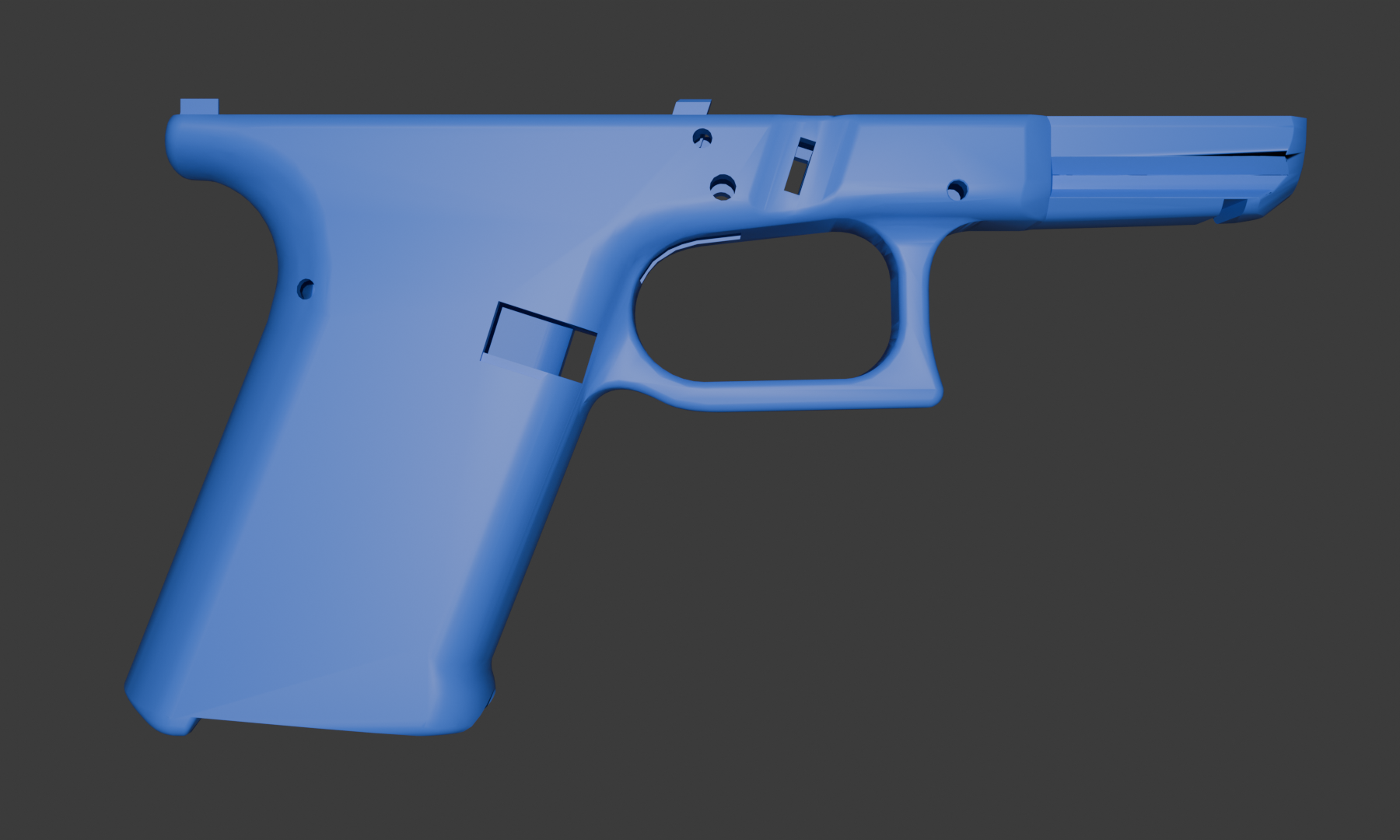}
		\caption{$\beta=0$}
	\end{subfigure}
        \hfill
	\begin{subfigure}{0.23\textwidth}
		\includegraphics[height=2.3cm]{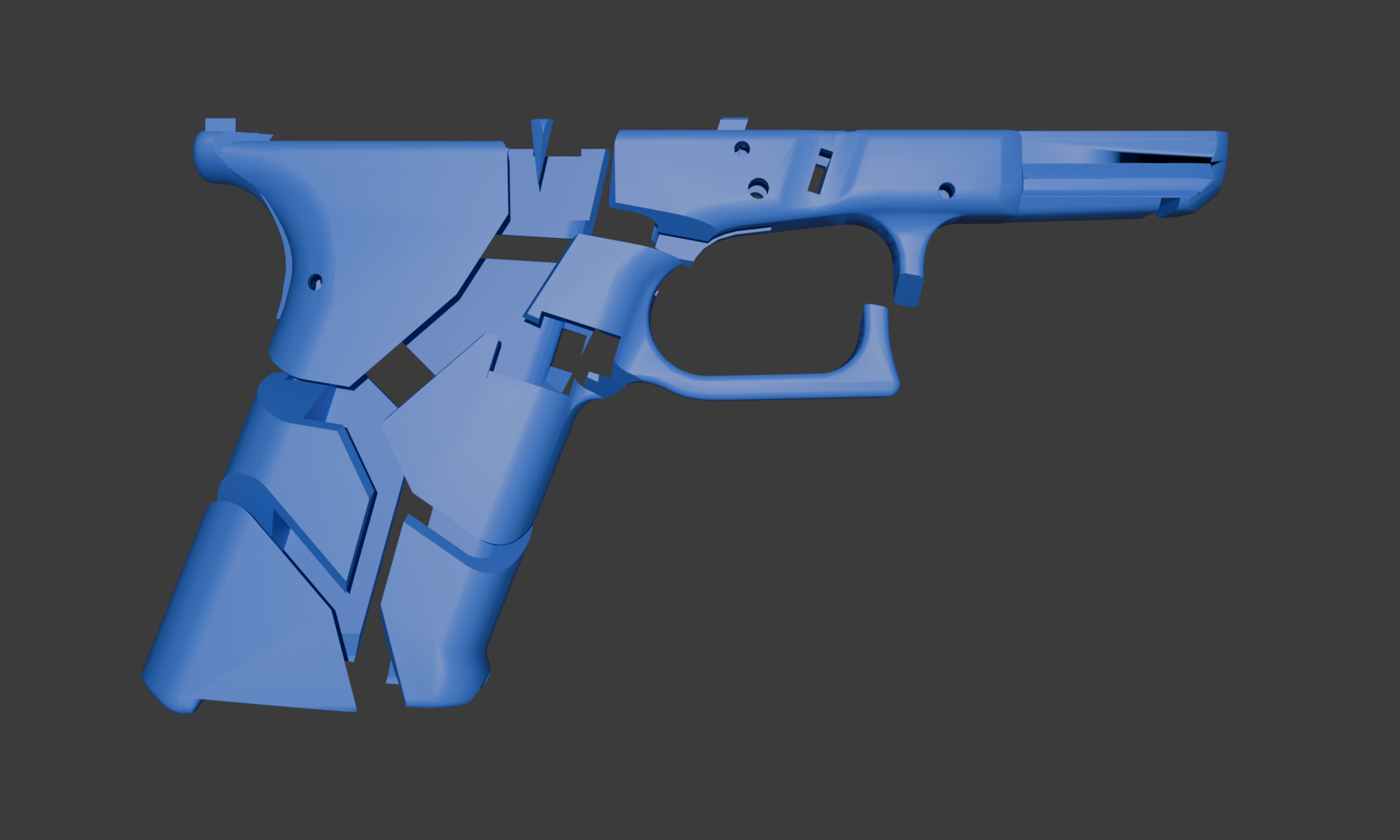}
		\caption{$\beta=10$.}
	\end{subfigure}
    \centering
	\begin{subfigure}{0.23\textwidth}
		\includegraphics[height=2.3cm]{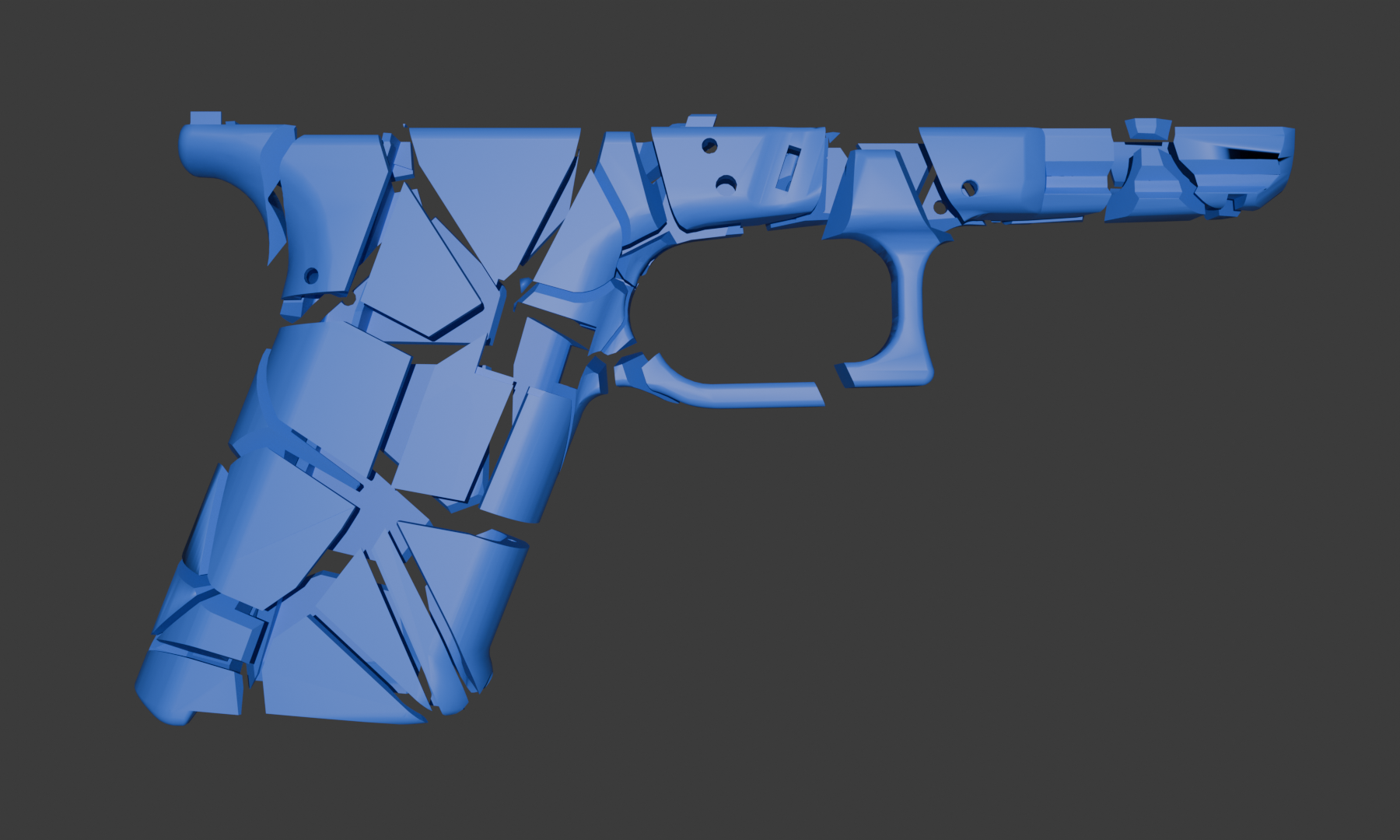}
		\caption{$\beta=50$.}
	\end{subfigure}
        \hfill
	\begin{subfigure}{0.23\textwidth}
		\includegraphics[height=2.3cm]{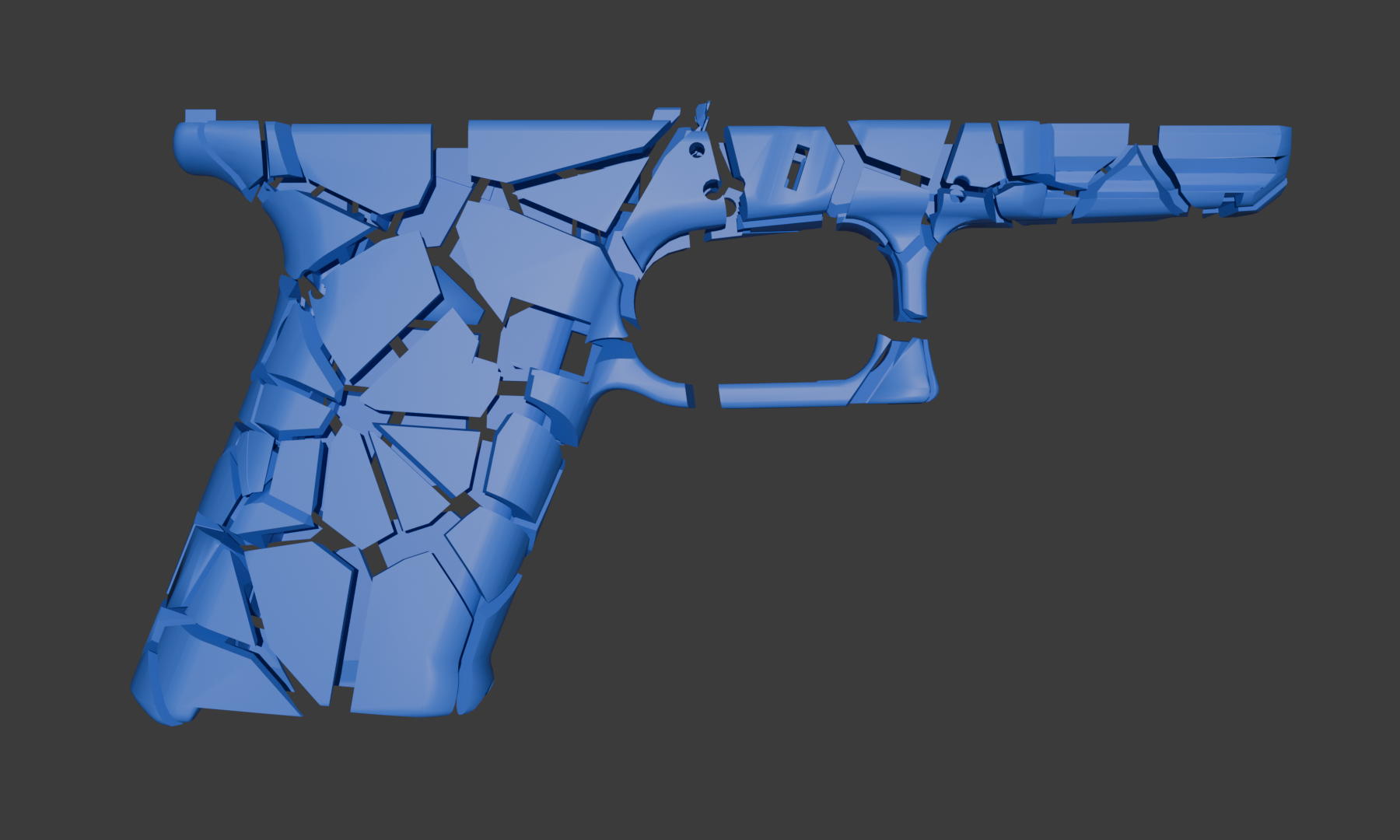}
		\caption{$\beta=100$.}
	\end{subfigure}
 	\caption{Results of fragmentation under different~$\beta$ values.}
        \label{fig:fragmentation}
\end{figure}

\bsub{Simulation Environment:}
This section presents a simulation-based study of fingerprint recovery, which extends the experiments described in the previous section with three significant enhancements.
First, it removes the constraints on the number and orientation of breaks.\footnote{In experiments, breaks are restricted to being orthogonal to the printing direction, and their number is limited by the maximum allowable fragmentation of embedded BRC codeword.}
Second, it introduces greater diversity in model shape and security parameter.
Finally, it evaluates the decoding success rate under the condition of fragment loss.
These enhancements provide more realistic simulations that closely mirror real-world forensic scenarios.

Each simulation is defined by three parameters,~$\alpha$, $\beta$, and~$\rho$.
With a fingerprint information of $128$ bits (the value is chosen based on the length of serial numbers of printers used in experiments), an~$\alpha$-BRC codeword is generated.
The parameter~$\beta$ determines the granularity of fragmentation.
Specifically, a 3D Voronoi diagram from~$\beta$ randomly chosen points within the model mesh is generated, and the model is fractured using the planes separating the Voronoi cells.
This procedure ensures at least~$\beta$ fragments since there are~$\beta$ Voronoi cells, each corresponds to at least one fragment (see fragmentation results in Figure~\ref{fig:fragmentation}).
Finally, the parameter~$\rho$ determines the ratio of fragments hidden from law enforcement.

Simulations are conducted on three models: the FMDA Glock frame (Fig.~\ref{fig:g19}), an AR-15 lower receiver, and the benign 3DBenchy.
For each $\beta\in\{10, 20, \ldots, 100\}$, 
the previously described random fragmentation process is applied to each model, generating~$128$ fragmentation instances.
For each of these~$128$ instances, we conceal a uniformly random~$\rho$ percent of the resulting fragments for each~$\rho \in \{0\%, 25\%, 50\%, 75\%\}$, 
and repeat over~$32$ simulations.
The decoder is then tested to determine whether it can recover the fingerprint from the remaining fragments.

We record the success rate of fingerprint recovery for every possible configuration~$(\alpha,\beta,\rho)$ and every model.
All simulation results are provided in Appendix~\ref{appendix:simulation}, while Table~\ref{tab:simulation-glock-simple} offers a subset of them.
With~$\alpha=8$, we observe an exceptionally high probability of fingerprint recovery even in the scenario when the printed Glock frame is broken into~$100$ fragments with~$75\%$ missing from the law enforcements.

\begin{table}[t]
    \resizebox{\columnwidth}{!}{{\begin{tabular}{|ccccccccc|}
\hline
\multicolumn{1}{|c|}{}        & \multicolumn{4}{c|}{4-BRC, 0.342 mm/bit}          & \multicolumn{4}{c|}{8-BRC, 0.215 mm/bit} \\ \hline
\multicolumn{1}{|c|}{\diagbox[height=1.6em]{$\beta$}{$\rho$}} & 0\% & 25\%  & 50\%  & \multicolumn{1}{c|}{75\%}   & 0\%     & 25\%     & 50\%    & 75\%      \\ \hline
\multicolumn{1}{|c|}{20}      & 100\% & 99.90\%  & 97.05\% & \multicolumn{1}{c|}{48.36\%}  & 100\%     & 100\%      & 100\%     & 95.95\%     \\
\multicolumn{1}{|c|}{40}      & 100\% & 99.98\% & 99.00\%    & \multicolumn{1}{c|}{57.52\%}  & 100\%     & 100\%      & 100\%     & 99.80\%    \\
\multicolumn{1}{|c|}{60}      & 100\% & 99.98\% & 98.80\%  & \multicolumn{1}{c|}{53.00\%}  & 100\%     & 100\%      & 100\%     & 99.95\%     \\
\multicolumn{1}{|c|}{80}      & 100\% & 99.98\% & 98.17\% & \multicolumn{1}{c|}{46.80\%} & 100\%     & 100\%      & 100\%     & 99.98\%     \\
\multicolumn{1}{|c|}{100}     & 100\% & 99.93\% & 96.75\% & \multicolumn{1}{c|}{33.59\%}  & 100\%     & 100\%      & 100\%     & 100\%       \\ \hline
\end{tabular}
    \caption{\blue{A portion of simulation results on the FMDA Glock frame using 4-BRC and 8-BRC, which demonstrates an exceptionally high probabilities of fingerprint recovery even in the extreme cases.}}
\label{tab:simulation-glock-simple}
\end{table}

\subsection{Code Rate}
The code rate~$r$ of BRC is defined as the ratio between the codeword length~$n$ and the information length~$k$, i.e.,
\begin{equation*}
r=\frac{k}{n}=\frac{l\cdot m -1}{l(m+\ceil{\log m}+4)+a(4m+11)},
\end{equation*}
and plays a critical role in forensic applications. 
Given an object to print and a security parameter~$\alpha$, a higher code rate 
allows embedding more information bits, supporting more advanced forensic functionalities.
These bits may include user IDs, geoposition data, and even the hash of model file.
For example, embedding a unique user ID into a printed object can help trace adversaries using Manufacturing-as-a-Service (MaaS) to make the criminal tool.
Geoposition data at the time of printing can aid law enforcement in tracking adversaries, while hash values act as watermarks for robust IP protection.

\begin{figure}[t]
    \centering
    \scalebox{.7}{
\begin{tikzpicture}
\begin{axis}[
    colormap/viridis,
    xmin=0, xmax=420,
    ymin=1, ymax=15,  
    zmin=0, zmax=250,
    xtick={50,100,150,200,250,300,350,400},
    ytick={2,4,6,8,10,12,14},
    ztick={0,50, 100, 150, 200, 250},
    grid=major,
    title={Minimum object dimension versus~$k$ and $\alpha$},
    xlabel={information length~$k$},
    xlabel style={
        rotate=-9,
        yshift=5pt,
    },
    ylabel={security parameter~$\alpha$},
    ylabel style={
        rotate=38,
        yshift=8pt,
        xshift=-5pt
    },
    zlabel={minimum dimension (mm)},
    zlabel style={
        yshift=-5pt,
        xshift=12,
    },
    ]
    \addplot3 [
        surf,
        shader=faceted,
    ] table {
3 1 5.64
3 2 13.08
3 3 18.84
3 4 29.64
3 5 36.6
3 6 43.56
3 7 50.52
3 8 68.4
3 9 76.68
3 10 84.96
3 11 93.24
3 12 101.52
3 13 109.8
3 14 118.08

11 1 8.88
11 2 14.64
11 3 24.48
11 4 31.44
11 5 38.4
11 6 45.36
11 7 62.28
11 8 70.56
11 9 78.84
11 10 87.12
11 11 95.4
11 12 103.68
11 13 111.96
11 14 120.24

17 1 10.44
17 2 19.32
17 3 26.28
17 4 33.24
17 5 40.2
17 6 56.16
17 7 62.28
17 8 70.56
17 9 78.84
17 10 87.12
17 11 95.4
17 12 103.68
17 13 111.96
17 14 120.24

31 1 14.16
31 2 21.12
31 3 28.08
31 4 35.04
31 5 50.04
31 6 58.32
31 7 66.6
31 8 74.88
31 9 83.16
31 10 91.44
31 11 99.72
31 12 108.0
31 13 132.84
31 14 139.92

39 1 15.96
39 2 22.92
39 3 29.88
39 4 43.92
39 5 50.04
39 6 58.32
39 7 66.6
39 8 74.88
39 9 83.16
39 10 91.44
39 11 99.72
39 12 108.0
39 13 132.84
39 14 142.32

47 1 17.76
47 2 24.72
47 3 37.8
47 4 43.92
47 5 52.2
47 6 60.48
47 7 68.76
47 8 77.04
47 9 85.32
47 10 93.6
47 11 101.88
47 12 125.76
47 13 132.84
47 14 142.32

55 1 19.56
55 2 31.68
55 3 37.8
55 4 46.08
55 5 54.36
55 6 62.64
55 7 70.92
55 8 79.2
55 9 87.48
55 10 95.76
55 11 118.68
55 12 125.76
55 13 135.24
55 14 144.72

79 1 25.56
79 2 33.84
79 3 42.12
79 4 50.4
79 5 58.68
79 6 66.96
79 7 75.24
79 8 83.52
79 9 104.52
79 10 111.6
79 11 121.08
79 12 130.56
79 13 140.04
79 14 149.52

89 1 27.72
89 2 36.0
89 3 44.28
89 4 52.56
89 5 60.84
89 6 69.12
89 7 77.4
89 8 97.44
89 9 104.52
89 10 114.0
89 11 123.48
89 12 132.96
89 13 142.44
89 14 151.92

99 1 29.88
99 2 38.16
99 3 46.44
99 4 54.72
99 5 63.0
99 6 71.28
99 7 90.36
99 8 97.44
99 9 106.92
99 10 116.4
99 11 125.88
99 12 135.36
99 13 144.84
99 14 154.32

109 1 32.04
109 2 40.32
109 3 48.6
109 4 56.88
109 5 65.16
109 6 83.28
109 7 90.36
109 8 99.84
109 9 109.32
109 10 118.8
109 11 128.28
109 12 137.76
109 13 147.24
109 14 156.72

119 1 34.2
119 2 42.48
119 3 50.76
119 4 59.04
119 5 76.2
119 6 83.28
119 7 90.36
119 8 99.84
119 9 109.32
119 10 118.8
119 11 128.28
119 12 137.76
119 13 147.24
119 14 156.72

129 1 36.36
129 2 44.64
129 3 52.92
129 4 69.12
129 5 76.2
129 6 83.28
129 7 92.76
129 8 102.24
129 9 111.72
129 10 121.2
129 11 130.68
129 12 140.16
129 13 149.64
129 14 159.12

139 1 38.52
139 2 46.8
139 3 62.04
139 4 69.12
139 5 76.2
139 6 85.68
139 7 95.16
139 8 104.64
139 9 114.12
139 10 123.6
139 11 133.08
139 12 142.56
139 13 152.04
139 14 161.52

149 1 40.68
149 2 54.96
149 3 62.04
149 4 69.12
149 5 78.6
149 6 88.08
149 7 97.56
149 8 107.04
149 9 116.52
149 10 126.0
149 11 135.48
149 12 144.96
149 13 154.44
149 14 163.92

191 1 47.88
191 2 57.36
191 3 66.84
191 4 76.32
191 5 85.8
191 6 95.28
191 7 104.76
191 8 114.24
191 9 123.72
191 10 133.2
191 11 142.68
191 12 152.16
191 13 161.64
191 14 171.12

203 1 50.28
203 2 59.76
203 3 69.24
203 4 78.72
203 5 88.2
203 6 97.68
203 7 107.16
203 8 116.64
203 9 126.12
203 10 135.6
203 11 145.08
203 12 154.56
203 13 164.04
203 14 173.52

215 1 52.68
215 2 62.16
215 3 71.64
215 4 81.12
215 5 90.6
215 6 100.08
215 7 109.56
215 8 119.04
215 9 128.52
215 10 138.0
215 11 147.48
215 12 156.96
215 13 166.44
215 14 175.92

227 1 55.08
227 2 64.56
227 3 74.04
227 4 83.52
227 5 93.0
227 6 102.48
227 7 111.96
227 8 121.44
227 9 130.92
227 10 140.4
227 11 149.88
227 12 159.36
227 13 168.84
227 14 199.68

239 1 57.48
239 2 66.96
239 3 76.44
239 4 85.92
239 5 95.4
239 6 104.88
239 7 114.36
239 8 123.84
239 9 133.32
239 10 142.8
239 11 152.28
239 12 161.76
239 13 191.64
239 14 199.68

251 1 59.88
251 2 69.36
251 3 78.84
251 4 88.32
251 5 97.8
251 6 107.28
251 7 116.76
251 8 126.24
251 9 135.72
251 10 145.2
251 11 154.68
251 12 183.6
251 13 191.64
251 14 199.68

263 1 62.28
263 2 71.76
263 3 81.24
263 4 90.72
263 5 100.2
263 6 109.68
263 7 119.16
263 8 128.64
263 9 138.12
263 10 147.6
263 11 175.56
263 12 183.6
263 13 191.64
263 14 199.68

275 1 64.68
275 2 74.16
275 3 83.64
275 4 93.12
275 5 102.6
275 6 112.08
275 7 121.56
275 8 131.04
275 9 140.52
275 10 167.52
275 11 175.56
275 12 183.6
275 13 191.64
275 14 202.32

287 1 67.08
287 2 76.56
287 3 86.04
287 4 95.52
287 5 105.0
287 6 114.48
287 7 123.96
287 8 133.44
287 9 159.48
287 10 167.52
287 11 175.56
287 12 183.6
287 13 194.28
287 14 204.96

299 1 69.48
299 2 78.96
299 3 88.44
299 4 97.92
299 5 107.4
299 6 116.88
299 7 126.36
299 8 151.44
299 9 159.48
299 10 167.52
299 11 175.56
299 12 186.24
299 13 196.92
299 14 207.6

311 1 71.88
311 2 81.36
311 3 90.84
311 4 100.32
311 5 109.8
311 6 119.28
311 7 143.4
311 8 151.44
311 9 159.48
311 10 167.52
311 11 178.2
311 12 188.88
311 13 199.56
311 14 210.24

323 1 74.28
323 2 83.76
323 3 93.24
323 4 102.72
323 5 112.2
323 6 135.36
323 7 143.4
323 8 151.44
323 9 159.48
323 10 170.16
323 11 180.84
323 12 191.52
323 13 202.2
323 14 212.88

335 1 76.68
335 2 86.16
335 3 95.64
335 4 105.12
335 5 127.32
335 6 135.36
335 7 143.4
335 8 151.44
335 9 159.48
335 10 170.16
335 11 180.84
335 12 191.52
335 13 202.2
335 14 212.88

347 1 79.08
347 2 88.56
347 3 98.04
347 4 119.28
347 5 127.32
347 6 135.36
347 7 143.4
347 8 151.44
347 9 162.12
347 10 172.8
347 11 183.48
347 12 194.16
347 13 204.84
347 14 215.52

359 1 81.48
359 2 90.96
359 3 111.24
359 4 119.28
359 5 127.32
359 6 135.36
359 7 143.4
359 8 154.08
359 9 164.76
359 10 175.44
359 11 186.12
359 12 196.8
359 13 207.48
359 14 218.16

371 1 83.88
371 2 103.2
371 3 111.24
371 4 119.28
371 5 127.32
371 6 135.36
371 7 146.04
371 8 156.72
371 9 167.4
371 10 178.08
371 11 188.76
371 12 199.44
371 13 210.12
371 14 220.8

    };
\end{axis}
\end{tikzpicture}

 }
    \caption{An illustration of minimum object dimension that allows for BRC codeword embedding, in which a bit is represented by~$0.12$ mm in object height, with respect to information length~$k$ and security parameter~$\alpha$.
    }
    \label{fig:minsize}
\end{figure}
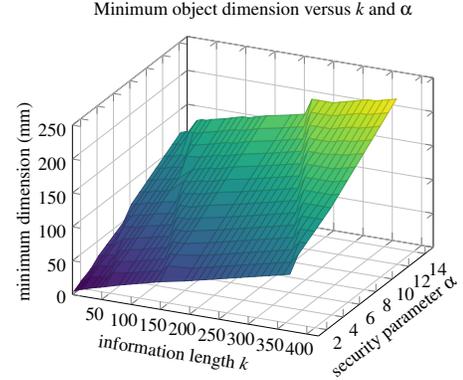

\begin{figure}[t]
	\begin{subfigure}{0.33\textwidth}
            \centering
		\includegraphics[height=2cm]{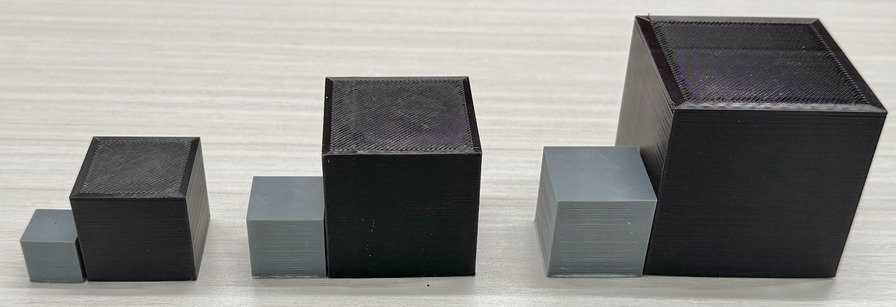}
		\caption{}\label{fig:cubes}
	\end{subfigure}
	\begin{subfigure}{0.14\textwidth}
            \centering
		\includegraphics[height=2cm]{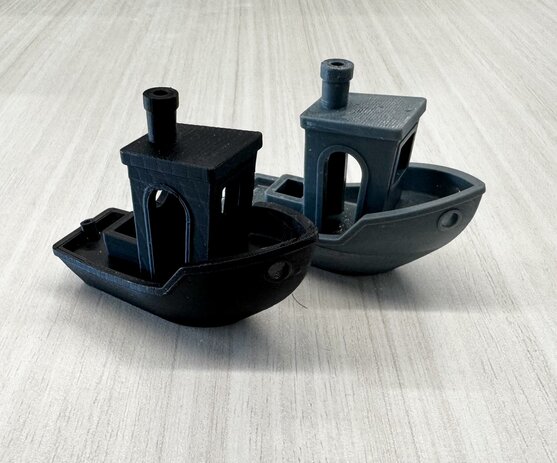}
		\caption{}\label{fig:benchys}
	\end{subfigure}
 	   \caption{\blue{(a) From a single information word of 39 bits, we generate 1-BRC, 2-BRC, and 3-BRC codewords with lengths of 133, 191, and 249 bits, respectively. For each codeword, we calculate the minimum required dimensions based on the embedding parameters described in Section~\ref{section:bit-embedding}. Using these dimensions, we print information-bearing cubes with both the Ender 3 and Mars 4 printers. The resulting cubes have side lengths of 15.96 mm, 31.92 mm, 22.92 mm, 45.84 mm, 29.88 mm, and 59.76 mm. (b) 3DBenchy models printed for surface roughness evaluation.}}
\end{figure}

Conversely, given a fixed model and specific fingerprinting information, a higher code rate permits using a larger security parameter~$\alpha$, increasing resilience to breaks and fragment losses.
Alternatively, when embedding fingerprints of a specific length with a given security parameter, a higher code rate reduces the required object dimension along the printing direction, and hence broadens the applicability of SIDE.
To this end, we visualized the minimum object dimensions required to embed BRC codewords in Figure~\ref{fig:minsize}, and printed~$6$ cubes with BRC codeword embedded to validate the feasibility of embedding; each is of the minimum dimension corresponding to the information length and security parameter (Figure~\ref{fig:cubes}).

\begin{figure}[t]
    \centering
    \scalebox{0.82}{\begin{tikzpicture}
\begin{axis}[
    title={code rate~$r$ vs.~$k,\alpha$},
    xlabel={information length~$k$ },
    ylabel={code rate~$r=k/n$},
    xmin=0, xmax=400,
    ymin=.05, ymax=.6,  
    xtick={50,100,150,200,250,300,350,400},
    ytick={0.1,0.2,0.3,0.4,0.5,0.6},
    legend pos=north west,
    ymajorgrids=false,
    legend style={nodes={scale=0.8, transform shape}}
]

\addplot[
    color=blue,
    mark=square,
    ]
    coordinates {
    (191, 0.48111)(203, 0.48681)(215, 0.49199)(227, 0.49672)(239, 0.50105)(251, 0.50503)(263, 0.5087)(275, 0.5121)(287, 0.51526)(299, 0.5182)(311, 0.52094)(323, 0.5235)(335, 0.5259)(347, 0.52816)(359, 0.53028)(371, 0.53228)
    };  
\addlegendentry{$a=1$}

\addplot[
    color=red,
    mark=triangle,
    ]
    coordinates {
    (179, 0.394)(191, 0.403)(203, 0.411)(215, 0.418)(227, 0.425)(239, 0.431)(251, 0.437)(263, 0.443)(275, 0.448)(287, 0.453)(299, 0.457)(311, 0.461)(323, 0.465)(335, 0.469)(347, 0.473)
    };
\addlegendentry{$a=2$} 

\addplot[
    color=green,
    mark=o,
    ]
    coordinates {
    (167, 0.327)(179, 0.337)(191, 0.347)(203, 0.356)(215, 0.364)(227, 0.372)(239, 0.379)(251, 0.386)(263, 0.392)(275, 0.398)(287, 0.404)(299, 0.409)(311, 0.414)(323, 0.419)(335, 0.424)
    };
\addlegendentry{$a=3$}

\addplot[
    color=yellow,
    mark=+,
    ]
    coordinates {
    (155, 0.273)(167, 0.284)(179, 0.294)(191, 0.304)(203, 0.313)(215, 0.322)(227, 0.33)(239, 0.338)(251, 0.345)(263, 0.352)(275, 0.358)(287, 0.364)(299, 0.37)(311, 0.376)(323, 0.381)
    };
\addlegendentry{$a=4$}

\addplot[
    color=black,
    mark=star,
    ]
    coordinates {
    (143, 0.229)(155, 0.24)(167, 0.251)(179, 0.261)(191, 0.271)(203, 0.28)(215, 0.289)(227, 0.297)(239, 0.304)(251, 0.312)(263, 0.319)(275, 0.325)(287, 0.332)(299, 0.338)(311, 0.344)
    };
\addlegendentry{$a=5$}

\addplot[
    color=violet,
    mark=x,
    ]
    coordinates {
    (131, 0.192)(143, 0.204)(155, 0.215)(167, 0.225)(179, 0.235)(191, 0.244)(203, 0.253)(215, 0.262)(227, 0.27)(239, 0.277)(251, 0.285)(263, 0.292)(275, 0.298)(287, 0.305)(299, 0.311)
    };
\addlegendentry{$a=6$}

\addplot[
    color=cyan,
    mark=+,
    ]
    coordinates {
    (119, 0.161)(131, 0.173)(143, 0.184)(155, 0.194)(167, 0.204)(179, 0.213)(191, 0.222)(203, 0.231)(215, 0.239)(227, 0.247)(239, 0.255)(251, 0.262)(263, 0.269)(275, 0.275)(287, 0.282)
    };
\addlegendentry{$a=7$}

\addplot[
    color=magenta,
    mark=diamond,
    ]
    coordinates {
    (107, 0.134)(119, 0.146)(131, 0.157)(143, 0.167)(155, 0.177)(167, 0.186)(179, 0.195)(191, 0.204)(203, 0.212)(215, 0.22)(227, 0.228)(239, 0.235)(251, 0.242)(263, 0.249)(275, 0.256)
    };
\addlegendentry{$a=8$}

\addplot[
    color=gray,
    mark=pentagon,
    ]
    coordinates {
    (95, 0.11137)(107, 0.12257)(119, 0.13326)(131, 0.14348)(143, 0.15327)(155, 0.16264)(167, 0.17163)(179, 0.18026)(191, 0.18855)(203, 0.19652)(215, 0.20418)(227, 0.21156)(239, 0.21866)(251, 0.22552)(263, 0.23213)(275, 0.23851)
    };
\addlegendentry{$a=9$} 

\addplot[
    color=lightgray,
    mark=oplus,
    ]
    coordinates {
    (83, 0.09121)(95, 0.10215)(107, 0.11263)(119, 0.12268)(131, 0.13232)(143, 0.14158)(155, 0.15049)(167, 0.15905)(179, 0.16729)(191, 0.17523)(203, 0.18288)(215, 0.19027)(227, 0.19739)(239, 0.20427)(251, 0.21092)(263, 0.21736)
    };
\addlegendentry{$a=10$}

\addplot [
    domain=0:400, 
    samples=10, 
    color=blue,
    ]
    {0.5};

\addplot [
    domain=0:400, 
    samples=10, 
    color=red,
    ]
    {1/3};

\addplot [
    domain=0:400, 
    samples=10, 
    color=green,
    ]
    {0.25};

\addplot [
    domain=0:400, 
    samples=10, 
    color=yellow,
    ]
    {0.2};

\addplot [
    domain=0:400, 
    samples=10, 
    color=black,
    ]
    {1/6};

\addplot [
    domain=0:400, 
    samples=10, 
    color=violet,
    ]
    {1/7};

\addplot [
    domain=0:400, 
    samples=10, 
    color=cyan,
    ]
    {1/8};

\addplot [
    domain=0:400, 
    samples=10, 
    color=magenta,
    ]
    {1/9};

\addplot [
    domain=0:400, 
    samples=10, 
    color=gray,
    ]
    {1/10};
    
\addplot [
    domain=0:400, 
    samples=10, 
    color=lightgray,
    ]
    {1/11};

\end{axis}
\end{tikzpicture}}
    \caption{An illustration of the code rate~$r$ verses the information length~$k$ and security parameter~$\alpha$. The horizontal lines serve as upper bounds on the code rates of a traditional CPC-based scheme (Section~\ref{section:coding-methods}).
    The curved lines represent the code rate of our scheme (Section~\ref{section:BRC}), color coded by~$\alpha$.
    }
    \label{fig:tkn}
\end{figure}
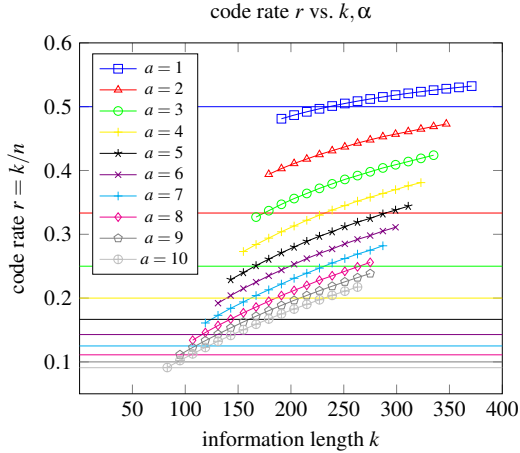

In Figure~\ref{fig:tkn}, we demonstrate how the code rate~$r$ is affected by the information length~$k=m\cdot l-1$ and the security threshold~$t\in[10]$.\footnote{Since we fix~$m=12$, the maximum~$l$ we can have is~$2^{(12-2)/2}-t=32-t$.
Therefore, the set of values of~$k$ is different with different~$t$.}
Meanwhile, as we will introduce in Section~\ref{section:coding-methods}, the method of using cyclically permutable codes (CPC) for the~$t$-break-resilient property requires repeating the CPC codeword~$t+1$ times, and hence the code rate is at most~$1/(t+1)$.
We also mark this value in the figure for every~$t$ in the same color.
In the majority of cases, BRC outperforms CPC-based scheme in terms of the code rate.

\subsection{Stealthy Embedding}
The primary goal of stealthy embedding is to minimize detectability, ensuring that the fingerprint remains hidden from adversaries while preserving its readability by forensic tools.
Hence, the feasibility of a stealthy embedding method is determined by its~\emph{stealthiness} and~\emph{readability}.
 
To evaluate stealthiness, we conducted experiments using both FDM and SLA printers.
Specifically, we use the stealthy embedding method introduced in Section~\ref{section:bit-embedding} with parameter~$(y,\epsilon)=(0.12,0.04)$ for the FDM printer and~$(y,\epsilon)=(0.6,0.02)$ for the SLA printer.
Results indicate that the differences in layer thicknesses were invisible to naked eyes under normal lighting conditions, unless the object is observed in certain angles.
We further measured surface roughness with our SD-OCT system.
The results, shown in Table~\ref{tab:RMS}, reveal a higher RMS value in prints with no embedded bits, but lower than prints using normal bit embedding. 

Readability refers to the accuracy of bit extraction.
With the extraction method introduced in Section~\ref{section:bit-extraction} and the SLA printer with parameters~$(y,\epsilon) = (0.08,0.04)$, we successfully extracted embedded bits using SD-OCT system.
Yet, we observed a trade-off between stealthiness and readability: while reducing~$\epsilon$ enhances stealthiness, it increases the demand for both high-resolution printer and extraction tools.

\subsection{Impacts on Print Quality}

\begin{figure}[t]
	\centering
	\begin{subfigure}{0.23\textwidth}
		\includegraphics[height=1.47cm]{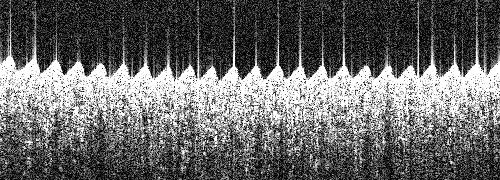}
		\caption{SLA, no embedding.}
	\end{subfigure}
        \hfill
	\begin{subfigure}{0.23\textwidth}
		\includegraphics[height=1.47cm]{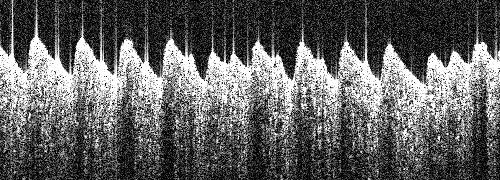}
		\caption{SLA, stealthy embedding.}
	\end{subfigure}\\
    \centering
	\begin{subfigure}{0.23\textwidth}
		\includegraphics[height=1.47cm]{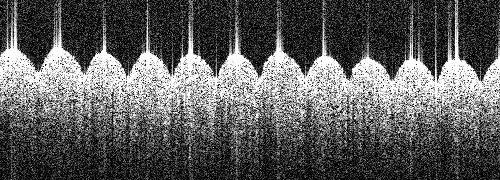}
		\caption{FDM, no embedding.}
	\end{subfigure}
        \hfill
	\begin{subfigure}{0.23\textwidth}
		\includegraphics[height=1.47cm]{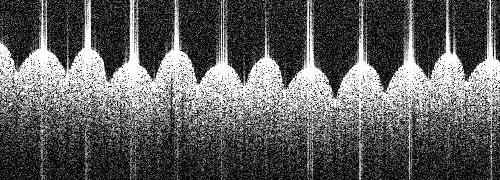}
		\caption{FDM, stealthy embedding.}
	\end{subfigure}
 	\caption{OCT scans on different materials and embeddings.}
        \label{fig:ost-scans}
\end{figure}

Due to the inherently discrete, layer-by-layer nature of the additive manufacturing process, surface roughness is a common characteristic of 3D-printed objects, often necessitating post-processing techniques such as sanding or filing.
Yet, SIDE's bit embedding requires to vary layer thickness, which can potentially increase surface roughness and expose the presence of the embedded bits to the adversary.

To quantify this impact, we printed 3DBenchy with normal embedding, stealthy embedding, and no embedding (uniform~$0.12$ and~$0.06$ mm layer thicknesses for FDM and SLA, respectively) with both FDM and SLA printers.
For each printed model, we used OCT to capture surface height deviations at five randomly chosen points of the flat part of its surface.
With the collected data, we computed and averaged the root mean square (RMS) roughness value, which is a widely recognized metric for surface roughness that quantifies the root mean square of surface height deviations from the mean surface height. The results are provided in Table~\ref{tab:RMS}, which quantify the impacts on print quality.

\begin{table}[t]
\centering
\resizebox{\columnwidth}{!}{\begin{tabular}{c|c|c|c}
\hline
              & No Embedding & Stealthy & Normal         \\ \hline
Ender 3 (FDM) & 10.741 $\mu$m      & 13.267 $\mu$m  & 14.719 $\mu$m  \\ \hline
Mars 4 (SLA)  & 7.260  $\mu$m      & 11.275 $\mu$m  & 12.918  $\mu$m \\ \hline
\end{tabular}%
}
\caption{The RMS values with different embedding methods and materials.}
\label{tab:RMS}
\end{table}

\subsection{Impact of Printer Imperfection}
In this section we briefly discuss the effect that printer imperfections might have on the ability to embed information.
Recall that SIDE embeds bits by varying layer thickness, and as a result, imperfections in 3D printers---particularly inaccuracies in the $z$-axis movement---can have negative effects on the embedding process.
The $z$-axis movements are typically controlled by a stepper motor, which converts its rotational motion (i.e., discrete steps) into linear motion along the printing direction; the ratio between them is reflected in~\emph{z-step} value (e.g.,~$0.04$ mm/step).
Calibration of a printer involves matching this parameter in the printer firmware with the actual~\emph{z-step} value.

A perfectly manufactured printer would have a uniform $z$-step, i.e., the nozzle displacement along the~$z$-axis, 
which is triggered by one microstep, 
is uniform across the entire range.
Yet, due to manufacturing errors, individual steps may lead to different nozzle displacements during printing, and the actual layer thickness may not align with the desired value.

SIDE's embedding method is designed to tolerate these imperfections.
For normal embedding, layers with designed thicknesses of~$x$, $2x$, and~$3x$ may be in the range of $
[(1-\delta)\cdot x,(1+\delta)\cdot x], [(1-\delta)\cdot 2x,(1+\delta)\cdot 2x]$, and~$[(1-\delta)\cdot 3x,(1+\delta)\cdot 3x]$, where~$\delta$ upper bounds the magnitude of errors in each step.
To avoid confusion in the reading of bits from layer thickness, these ranges must not overlap, requiring~$\delta < 0.2$; this is an extremely low standards, considering that the common stepper motors used in 3D printers (e.g., NEMA 17) usually have~$\delta=0.05$.

To evaluate SIDE's robustness under these imperfections, a cuboid was printed using an FDM printer with noise added to the $z$-step value in the firmware; this simulates the behaviors of a poorly manufactured printer.
Despite these imperfections, bits were successfully extracted, demonstrating the method's reliability even in the extreme real-world conditions.

\subsection{Impact on Printing from TEE Protection}
\begin{table}[t]
\resizebox{\columnwidth}{!}{\begin{tabular}{c|cc|cc|clcl}
\hline
         & \multicolumn{2}{c|}{Peak Heap Usage} & \multicolumn{2}{c|}{Print Time} & \multicolumn{4}{c}{Firmware Execution Time}               \\ \hline
Object   & w/ TEE            & w/o TEE          & w/ TEE (s)         & w/o TEE (s)       & \multicolumn{2}{c}{w/ TEE} & \multicolumn{2}{c}{w/o TEE}  \\ \hline
Glock    & 1.475 MB          & 9.841 MB         & 53344.201             & 53344.103        & \multicolumn{2}{c}{163.85 s}       & \multicolumn{2}{c}{163.85 s} \\
AR-15    & 1.305 MB          & 8.438 MB         & 76831.497             & 76831.397        & \multicolumn{2}{c}{209.69 s}       & \multicolumn{2}{c}{209.69 s} \\
3DBenchy & 0.965 MB          & 9.979 MB         & 11828.314              & 11828.241        & \multicolumn{2}{c}{34.90 s}       & \multicolumn{2}{c}{34.90 s}  \\ \hline
\end{tabular}
\caption{System Performance of SIDE.}
\label{tab:sysperf}
\end{table}

SIDE prevents attackers from exploiting vulnerabilities in untrusted software by leveraging TEE to isolate SIDE from untrusted components. To evaluate the impact of TEE protection on 3D printing, we measure (1) the execution delay of SIDE's software components, (2) the execution delay and memory overhead of the whole 3D printing procedures, and (3) the impact on the quality of printed objects with and without SIDE on Glock frame, AR-15 lower receiver, and 3DBenchy.

\bsub{Software Components Execution and 3D Printing Delay: }
To measure the execution delay of SIDE's main components, including the codec, G-Code generation (involving both layer and toolpath generation), and firmware, we record timestamps at the start and end of each component's execution and calculate the average delays over~$10$ printing processes.
To evaluate the overall 3D printing delay, we record timestamps marking the start and completion of the printing application execution and calculate the difference to determine the delay.
As shown in Table~\ref{tab:sysperf}, SIDE introduces no runtime overhead for the execution delays of individual software components under protection of TEE, as the printing binaries remain the same on the same architecture regardless of the CPU security state.
However, SIDE reduces peak memory usage by employing the progressive slicing strategy, which introduces multiple context switches between the REE and TEE. This results in additional execution delay for the overall printing process.
Nonetheless, when compared to the delay inherent in physical printing, this runtime overhead is negligible.

\bsub{Memory Overhead:}
To measure the effectiveness of SIDE in automatically splitting 3D object models to reduce heap consumption, we instrument dynamic memory allocation and deallocation APIs within the 3D printing software to monitor the size of dynamically allocated memory, both with and without SIDE.
As shown in Table~\ref{tab:sysperf}, SIDE reduces the peak heap memory usage to 14.99\%, 15.47\%, and 9.67\% on Glock, AR-15, and 3DBenchy respectively through progressive slicing, effectively addressing the memory limitations of TEE.

\bsub{Printed Object Quality: }
To measure the quality impact on printed object from TEE implementation, we calculate the root mean square (RMS) roughness of printed object with observed height from optical coherence tomography with and without TEE implementation.
We observe that the difference in the roughness of both objects is negligible.

\section{Security Analysis and Discussion}
This section analyzes the security of SIDE against various potential attacks.

\bsub{Excessive breaking and hiding:}
In order to jeopardize fingerprint extraction, the adversary may attempt to compromise the availability of embedded information by excessively breaking the printed tool and hide a great amount of fragments.
However, as shown in the simulation results in Appendix~\ref{appendix:simulation}, SIDE provide an exceptionally high success rate of fingerprinting recovery even in the extreme case that the tool is broken into~$100$ pieces, with~$75\%$ of them being missing from the decoder.

This is attributed to the break-resiliency and loss-tolerance properties of~$\alpha$-BRC.
First, as stated in Remark~\ref{remark:type-of-breaks} and Section~\ref{section:preprocessing}, fragments can be reassembled if they retain sufficient overlapping bits.
Thus, a break is repairable in the preprocessing stage unless it is perpendicular, or nearly perpendicular, to the printing direction.
Additionally, concealing fragments does not necessarily result in the loss of information, as their content may also exist in confiscated fragments.
Finally, the breaks and missing bits that cannot be fixed in the preprocessing stage are addressed in BRC decoding (Section~\ref{section:decoding}).

\bsub{Forging Attacks:} Attackers may attempt forging attacks by embedding incorrect information into a printed object to impersonate another printer.
We propose two tiered defense against such attacks.
First, in order to impersonate another printer, one would have to breach the TEE of one's own printer, contradicting our security assupmtion.
Second, even if the TEE is broken successfully, impersonation can still be prevented via exploiting intrinsic printer properties as follows. 
It was shown in the literature that much like firearms, 3D printers carry a unique signature that is manifested in various minor defects in the resulting prints~\cite{li2018printracker,gao2021thermotag}. 
Hence, SIDE can embed a hash of these imperfections inside the object. 
Then, forging attacks would fail since the de-facto defects of the object would not match the hash of the printer's imperfection.
However, manipulating the embedding mechanisms in SIDE requires significant additional effort, such as breaching the TEE-protected 3D printing software or employing specialized hardware and expertise to modify the object’s surface.
Thus, SIDE raises the bar for this attack.
Furthermore, SIDE can authenticate the print by embedding the hash of intrinsic printer properties, such as manufacturing imperfections~\cite{li2018printracker} or thermodynamic characteristics~\cite{gao2021thermotag}, enabling detection of mismatches between the print and its originating printer.

\bsub{Hardware Attacks: }
Adversaries may attempt hardware attacks, such as replacing the control board or injecting signals between the control board and printers, to bypass fingerprint embedding. While SIDE does not defend against these hardware attacks, which require specialized expertise and significant cost, establishing secure communication channels and authentication mechanisms between the control board and printer component controllers can help mitigate such attacks, further raising the barriers to bypassing 3D fingerprinting.

\bsub{Surface Altering Attack:}
Attackers may attempt to alter the surface of 3D prints to tamping with the 3D fingerprinting information. SIDE is inherently immune to this type of attack by its design. Recall that SIDE embeds information by altering physical elements on the printing direction (e.g., layer thickness), which is an intrinsic property of the printed object and cannot be altered by post processings on the surface. Indeed, the reading of bits does not require the surface of the fragment to be free of adversarial tampering, as less-economical solutions, such as an industrial CT scanner, can be used to infer the layer thickness via tomographic analysis.

\section{Related Work}

\subsection{Existing 3D Fingerprinting Methods}\label{section:existing}
Several methods for embedding bits in 3D-printed objects have recently been proposed in the literature.
These technologies allow the printer to vary either the orientation of the nozzle, the thickness of the layer, or the printing speed.
Within reasonable bounds, varying either of those has a marginal effect on the functionality of the object.
By varying layer thickness, for example, the printer can embed a~$0$ by printing a layer that is slightly thinner, and a~$1$ by printing a layer that is slightly thicker, than some reference thickness.
By varying the orientation of the nozzle, bits can be embedded by the relative orientation of adjacent layers; for example, if two adjacent layers are oriented similarly, the embedded bit is~$0$, and otherwise it is~$1$. Both methods are illustrated in Figure~\ref{fig:embedding}.

\begin{figure}[t]
	\centering
	\begin{subfigure}{0.23\textwidth}
		\centering
		\includegraphics[height=2.6cm]{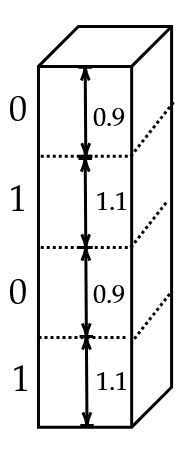}
		\caption{}\label{figure:additionalExamplesA}
	\end{subfigure}
        \hfill
	\begin{subfigure}{0.23\textwidth}
		\centering
		\includegraphics[height=2.6cm]{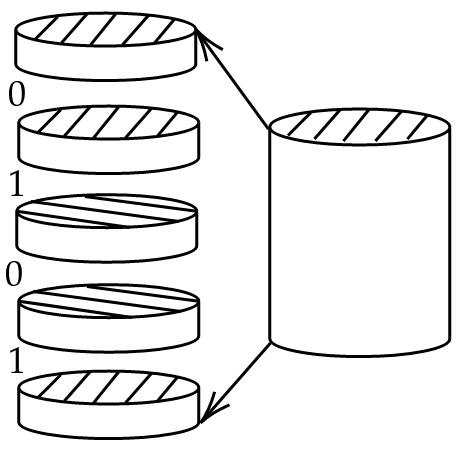}
		\caption{}\label{figure:additionalExamplesB}
	\end{subfigure}	
	\caption{Two possible methods for embedding bits in a 3D printed object with little to no effect on functionality:
                (a) Embedding by layer thickness; thicker layers represent~$1$ and thinner layers represent~$0$.
                (b) Embedding bits using the orientation of adjacent layers; if two adjacent layers are oriented similarly, it is a~$0$, and if oriented differently, it is a~$1$.
                Both left and right figures contain the bits~$0101$.}\label{fig:embedding}
\end{figure} 

Similar ideas have been implemented successfully in several recent works.
Delmotte~\textit{et al.} vary the thickness of each layer across several adjacent layers to create a matrix of bits that is visible to the naked eye on the surface of the object~\cite{delmotte2019blind}.
Parity bits were then added to resolve reliability issues in some cases, and additional noise patterns were discussed, such as orientation issues and sanding.
In the method LayerCode~\cite{maia2019layercode}, variations in color and thickness were used to embed a barcode on the surface of the printed object, that can be retrieved using a smartphone camera.
An orientation-based method has been implemented in~\cite{hou20153d}, where the authors print a reference layer that is circularly grooved by a sequence with low auto-correlation.
Data is embedded in all other layers by the respective angle of the layer to the reference layer; this enables encoding with alphabet size larger than two.

Other creative ideas have been explored, including embedding information-carrying cavities within the object~\cite{willis2013infrastructs,li2017aircode}, water-marking the 3D-mesh of the surface of the object~\cite{cho2006oblivious}, inserting RFID tags~\cite{voris2017three}, inserting a series of notches which create an acoustic barcode when tapped~\cite{harrison2012acoustic}, etc.
In the data extraction phase, most existing methods rely on an RGB camera, a 3D scanner, or an ordinary scanner.
Future technologies however, such as the ones in Figure~\ref{fig:embedding}, might require an industrial CT scanner.
However, none of these approaches is suitable for forensic applications. 
First, they all implicitly assume that a mechanism ensuring correct information embedding is in place.
Such a mechanism is crucial since in most scenarios the adversary owns the printer and/or the file, and might potentially remove the embedded bits altogether.
Second, none of the methods is provably resilient to adversarial tampering; they can be easily breached by an adaptive adversary that can scrape the object or break it apart.
 
PrinTracker~\cite{li2018printracker} represents a different line of works, which apply machine-learning based methods to identify the intrinsic fingerprint resulting from printer hardware imperfections.
Our proposed solution allows for embedding arbitrary bits, which enables a variety of forensic applications, including but not limited to fingerprinting.
For example, certain applications may require the embedding of timestamp or geoposition of the printing into the printed object in order to check whether the printer has been misused in unauthorized time and location.

\subsection{Coding Methods} \label{section:coding-methods}
Secure information extraction is essentially a problem of communicating in the presence of (potentially adversarial) noise.
Such problems are studied in the information-theoretic literature under the title of coding theory~\cite{roth2006coding}.
A typical problem setup in coding theory includes a sender (e.g., the printer), which would like to send a message (e.g., printer ID) to a receiver (e.g., law-enforcement). 
This must be accomplished successfully even if an adversary (e.g., a criminal) injects adversarial noise that is limited by some security parameter.

Coding problems of information extraction from fragments have been previously studied in the literature, motivated by applications in distributed systems~\cite{wang2021low,wang2022breaking,wang2022all,wang2024encoding}, and DNA storage.
In particular, several variations of the \emph{torn-paper channel} were studied in~\cite{shomorony2020communicating,shomorony2021torn,ravi2021capacity,bar2023adversarial}, where~\cite{shomorony2020communicating,shomorony2021torn,ravi2021capacity} focused on a probabilistic error model which is incompatible with our adversarial setting, and~\cite{bar2023adversarial} studied an adversarial model in which fragment are restricted in length.
SIDE complements these probabilistic-based solutions by relaxing the assumptions.
In fact, we assume an adaptive adversary who is fully aware of the coding scheme, and is constrained only by the parameters~$t$ and~$s$.
Consequently, previous methods fail to ensure secure information embedding facing such an adversary, as they can be exploited by strategically selecting break locations and concealing specific fragments.

Similarly to our scenario, a video watermarking solution for IP protection~\cite{kuribayashi2006generate} employs \emph{cyclically permutable codes} (CPC).
As the name suggests, CPC codewords are cyclically distinct, i.e., one cannot obtain a codeword by cyclically shifting another codeword.
In CPC-based solutions, the watermark is encoded to a CPC codeword, and then iteratively spread over consecutive video frames.
Due to its cyclically distinct feature, the embedded CPC codeword can be obtained from every video clip that has more frames than the codeword length.
Although this method may serve as a simple solution to the 3D fingerprinting problem, it requires to repeat the CPC codeword at least~$t+1$ times to guarantee the existence of such a video clip in every possible way to cut the video~$t$ times.
Hence, it leads to a code rate of at most~$1/(t+1)$, i.e.,~$k(t+1)$ bits needs to be actually embedded in the video to represent~$k$ bits of information, which hinders its applicability in real-world scenario; see Figure~\ref{fig:tkn} for rate comparison.

Most closely related, \cite{wang2024break} studied $t$-break codes, which concerns the recovery of information from~\emph{arbitrarily} broken codewords, which is fundamentally different from probabilistic error models mentioned earlier.
They provided a theoretical analysis of the fundamental limits, and an (almost) matching code construction.
However, the scheme described in~\cite{wang2024break} involves random encoding, is only effective for a very large number of embedded bits, and only tolerates a small amount of lost bits.
SIDE complements these works by offering a coding scheme with tolerance to fragment loss, explicit and efficient encoding function, and high code rate.

\begin{table}[!t]
\scalebox{.64}{
\begin{tabular}{c|c|c|c}
\hline
\textbf{Coding Method}       & \textbf{Model} & \textbf{Minimum Length of Fragments} & \textbf{Loss Tolerance} \\ \hline
Shomorony et al.~\cite{shomorony2020communicating,shomorony2021torn} & probability    & Not Assumed                          & No                      \\ \hline
Ravi et al.~\cite{ravi2021capacity}        & probability    & Not Assumed                          & Yes                     \\ \hline
Bar-Lev et al.~\cite{bar2023adversarial}      & adversarial    & Assumed                              & No                      \\ \hline
Kuribayashi et al.~\cite{kuribayashi2006generate}      & adversarial    & Not Assumed                          & Yes                      \\ \hline
Wang et al.~\cite{wang2024break}       & adversarial    & Not Assumed                          & No                      \\ \hline
$\alpha$-BRC                         & adversarial    & Not Assumed                          & Yes                     \\ \hline
\end{tabular}
}
\caption{Coding Method Comparison}
\end{table}

\section{Conclusion}\label{section:conclusion}

This paper introduces SIDE, a forensic fingerprinting mechanism for 3D prints that ensures secure information embedding and extraction against adversarial threats.
To enable fingerprint recovery despite malicious breaks and fragment hiding, SIDE employs break-resilient, loss-tolerant embedding techniques.
It safeguards the embedding process using a TEE and a progressive slicing mechanism.
SIDE's efficiency and effectiveness are validated through simulations and a prototype on a Creality Ender 3 printer with a Raspberry Pi 3B.

\section*{Acknowledgments}
{
    This research was supported by the National Science Foundation (Grants CNS-2223032, CNS-2038995, and CNS-223863) and the Army Research Office (Grant W911NF-24-1-0155).
}




    
    
    

\printbibliography

\begin{algorithm*}[ht]\caption{$\textsc{d-encode}$ (distinct strings encoding)}\label{alg:djEncoding}
\begin{algorithmic}[1]
\Statex \textbf{Input:}~{A binary string~$\bfw\in\bi^{l\cdot m-1}$, where~$l,m$ are positive integers such that~$m\geq \ceil{2\log l}+2$.}
\Statex \textbf{Output:}~{Array~\texttt{dStrings} of~$l$ length-$m$ pairwise distinct binary strings.}
\State Let~$\bfw\gets \bfw\circ 1$,~$i\gets0, i_\text{end}\gets l-1$\label{line:append1}, and~$\texttt{dStrings}[a]\gets\bfw[a\cdot m:(a+1)\cdot m)$ for~$a\in[0,l -1]$.\label{line:segmentW}
\While{$i < i_\text{end}$}\label{line:whileCondition}
    \State $j\gets i+1$
    \While{$j\leq i_\text{end}$}\label{line:enumerate}
        \If{$\texttt{dStrings}[i]=\texttt{dStrings}[j]$}\label{line:match}
            \State $\texttt{dStrings}\gets (\texttt{dStrings}[0:j-1],\texttt{dStrings}[ j+1:l-1])$\label{line:deleteMatched}
            \State $j'\gets 0$
            \Repeat{$j$}
                \Do
                    \State $j'\gets j'+1$\label{line:increaseBy1}
                \doWhile{$\exists r\in[l-2]~\mbox{s.t. } \texttt{dStrings}[r-1][0:\ceil{\log l}]=\textsc{binary}(j',\ceil{\log l}+1)$}\label{line:coincide}
            \EndRepeat
            \State $ \texttt{dStrings}.\textsc{append}(\textsc{binary}(j',\ceil{\log l}+1)\circ\textsc{binary}(i,\ceil{\log l})\circ 0^{m-2\ceil{\log l}-1})$\label{line:newString}
            \State $i_\text{end}\gets i_\text{end}-1$\label{line:iendminus1}
        \Else
            \State $j\gets j+1$
        \EndIf
    \EndWhile    
    \State $i\gets i+1$    
\EndWhile
\State \textbf{return}~$\texttt{dStrings}$
\end{algorithmic}
\end{algorithm*}

\appendix

\section{Distinct Strings}\label{appendix:distinct-string}

\subsection{Encoding}

Alg.~\ref{alg:djEncoding} offers a procedure that maps an input word $\bfw\in\bi^{l\cdot m -1}$ to~\texttt{dStrings}, which is an array of~$l$ pairwise distinct binary strings of length~$m$.
Besides, the inverse operation, as given in Alg.~\ref{alg:djDecoding}, outputs~$\bfw$ given the array~\texttt{dStrings}.

The encoding algorithm first appends a~$1$ to~$\bfw$ (line~\ref{line:append1}), and as a result,~$\vert\bfw\vert=l\cdot m$.
The word~$\bfw$ is segmented to~$l$ intervals of length~$m$ and placed in a tentative array~\texttt{dStrings} (line~\ref{line:segmentW}).
The segments are not necessarily pairwise distinct at this point, and they are referred as the~\emph{original}.

The encoding algorithm continues to look for identical original segments, deletes one of them, and appended a~\emph{new} binary string to~\texttt{dStrings}; this new string contains the indexing information of the two identical binary strings which allows for decoding at a later point in time.
Moreover, we end every new string with a~$0$ to distinguish it from the rightmost original segment which ends with a~$1$.

With indices~$i$ and~$i_\text{end}$ initially set to~$0$ and~$l-1$, the encoder enumerates every index~$j\in[i+1,i_\text{end}]$ for a match, i.e.,~$\texttt{dStrings}[i]$ is identical to~$\texttt{dStrings}[j]$  (line~\ref{line:enumerate}).
Once a match is found, the latter is deleted from~\texttt{dStrings} (line~\ref{line:deleteMatched}) and the indices~$i,j$ are recorded in a binary string to be placed at the end of~\texttt{dStrings}, which will be used for recovering the deleted entry during decoding.

Note that, na\"ively defining the new strings as the concatenation of binary representations of~$i$ and~$j$ may introduce more repeated strings; it may coincide with existing elements in~\texttt{dStrings}.
To this end, the algorithm looks for an~\emph{alternative} binary representation of~$j$ which is not identical to the first~$\ceil{\log l}+1$ bits of~\emph{every} existing element in~\texttt{dStrings}.

Starting from~$j'=0$, the following procedure is repeated~$j$ times.
In each time,~$j'$ is increased by~$1$ (line~\ref{line:increaseBy1}).
Then, it continues to increase until its binary representation in~$\ceil{\log l}+1$ bits does not coincide with the first $\ceil{\log l}+1$ bits of every existing element in~\texttt{dStrings} (line~\ref{line:coincide}).
One may imagine this process as looking for the~$j$-th available parking slot in a row, in which some have been occupied (unavailable).
A slot is unavailable if the binary representation of its index coincides with the first $\ceil{\log l}+1$ bits of any element in~\texttt{dStrings}.
Otherwise, it is available.
Starting for index~$0$,~$j'$ is indeed the index of the~$j$-th available slot.

Note that when the repetition stops,~$j'$ equals to the sum of~$j$ and the number of times the condition in line~\ref{line:coincide} was true.
Recall that~$1\leq j\leq l-1$, and the latter equals to the number of unavailable slots that may be encountered during the increment of~$j'$, which is at most~$l-1$ since there are only~$l-1$ elements in~\texttt{dStrings}.
Therefore,~$j'\leq 2l-2$ and can be represented by~$\ceil{\log l}+1$ bits.

\begin{algorithm*}\caption{\textsc{d-decode} (distinct strings decoding)}\label{alg:djDecoding}
\begin{algorithmic}[1]
\Statex \textbf{Input:}~{Array~\texttt{dStrings} of~$l$ length-$m$ pairwise distinct binary strings.}
\Statex \textbf{Output:}~{The information word~$\bfw\in\bi^k$ used to generate~\texttt{dStrings}}.
\While{$\texttt{dStrings}[-1][-1]=0$}\label{line:decodeWhile}
    \State $i\gets \textsc{integer}(\texttt{dStrings}[l][\ceil{\log l}+2:2\ceil{\log l}+1])$\label{line:readi}
    \State $j,j'\gets \textsc{integer}(\texttt{dStrings}[l-1][1:\ceil{\log l}+1])$\label{line:readjp}
    \ForAll{$r\in[0,l-2]$ and~$s\in[j'-1]$}\label{line:getj1}
        \If{$ \texttt{dStrings}[r][1:\ceil{\log l}+1]= \textsc{binary}(s,\ceil{\log l}+1$)} 
             $j\gets j-1$\label{line:getj2}
        \EndIf
    \EndFor
    \State $
                \texttt{dStrings}\gets (\texttt{dStrings}[0:j-1],\texttt{dStrings}[i],\texttt{dStrings}[j:l-2])$\label{line:recovery}
\EndWhile
\State $\bfw\gets\texttt{dStrings}[0]\circ\ldots\circ\texttt{dStrings}[l-1]$
\State \textbf{return}~$\bfw[:-1]$\label{line:removeLastBit}
\end{algorithmic}
\end{algorithm*}

Therefore, we use the binary representation of~$j'$ in~$\ceil{\log l}+1$ to serve as the alternative representation of~$j$.
It is concatenated with the binary representation of~$i$ (in~$\ceil{\log l}$ bits since~$i<l$) and sufficiently many~$0$'s to make a~\emph{new} string (line~\ref{line:newString}).
The new string is appended to~\texttt{dStrings}, and is different from every other element in the first~$\ceil{\log l}+1$ bits; this fact gives the following lemma.
\begin{lemma}\label{lemma:differetNewString}
    The new binary string being appended in line~\ref{line:newString} is different from every existing elements in~\texttt{dStrings}.
\end{lemma}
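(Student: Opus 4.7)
The plan is to exploit the very definition of $j'$: the inner do-while loop is engineered precisely so that when it terminates, $\textsc{binary}(j',\ceil{\log l}+1)$ coincides with the leading $\ceil{\log l}+1$ bits of no element currently in $\texttt{dStrings}$. Since the newly constructed string begins with exactly $\textsc{binary}(j',\ceil{\log l}+1)$, it must differ from every existing element in its first $\ceil{\log l}+1$ positions, and in particular be distinct from every element. Thus the proof reduces to two things: (i) confirming that the do-while loop indeed terminates on such a $j'$, and (ii) confirming that the selected $j'$ still fits into $\ceil{\log l}+1$ bits, so that $\textsc{binary}(j',\ceil{\log l}+1)$ is well-defined.

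For (i), I would read the loop structure carefully. The outer \textbf{repeat}\ $j$ \textbf{times} block performs $j$ passes; each pass enters the inner do-while, increments $j'$ by one, and keeps incrementing as long as there exists $r$ with $\texttt{dStrings}[r][0{:}\ceil{\log l}+1]=\textsc{binary}(j',\ceil{\log l}+1)$. Hence each pass ends at a value of $j'$ that is \emph{unmatched} by any prefix in $\texttt{dStrings}$, and the final value after all $j$ passes is also unmatched. This is precisely the property the lemma asks for.

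For (ii), I would run the counting argument already sketched by the authors and make it rigorous. At the moment the new string is appended, $\texttt{dStrings}$ contains $l-1$ entries (one having been deleted in line~\ref{line:deleteMatched}). Therefore at most $l-1$ values of $j'$ are ``blocked'' by some prefix match. Moreover, the outer loop advances $j'$ to the $j$-th \emph{unblocked} value, and $j\le i_{\text{end}}\le l-1$. In the worst case $j'$ must traverse all $l-1$ blocked values plus $j$ unblocked ones, giving $j'\le (l-1)+(l-1)=2l-2$. Since $\ceil{\log l}+1$ bits represent values up to $2^{\ceil{\log l}+1}-1\ge 2l-1$, the encoding $\textsc{binary}(j',\ceil{\log l}+1)$ is well-defined.

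Combining (i) and (ii), the appended string $\textsc{binary}(j',\ceil{\log l}+1)\circ \textsc{binary}(i,\ceil{\log l})\circ 0^{m-2\ceil{\log l}-1}$ disagrees with every existing element of $\texttt{dStrings}$ in the first $\ceil{\log l}+1$ coordinates, which establishes the claim. The only subtle point—really the main obstacle—is the counting in step (ii), since one must ensure the upper bound $2l-2$ is genuinely tight in the adversarial enumeration order of blocked slots and that $\ceil{\log l}+1$ bits always suffice; everything else follows from unpacking the loop's termination condition.
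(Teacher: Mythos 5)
Your proposal is correct and follows essentially the same route as the paper: the text surrounding Lemma~\ref{lemma:differetNewString} argues exactly that (i) the do-while loop terminates at a value of $j'$ whose $(\ceil{\log l}+1)$-bit representation matches no existing prefix in \texttt{dStrings}, and (ii) $j'\le j+(l-1)\le 2l-2$ (the ``parking slot'' count), so it fits in $\ceil{\log l}+1$ bits. Your counting in step (ii) and the conclusion that the new string differs from every existing element in its first $\ceil{\log l}+1$ coordinates match the paper's argument exactly.
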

Lemma~\ref{lemma:differetNewString} allows us to decrease~$i_\text{end}$ by one (line~\ref{line:iendminus1}) since there is no need to compare~$\texttt{dStrings}[j]$ with element whose index is greater than~$i_\text{end}-1$.
The algorithm terminates when there are no elements of~\texttt{dStrings} remains to be compared (line~\ref{line:whileCondition}), and its output satisfies the following.

\begin{theorem}\label{theorem:pairwise-distinct}
    Algorithm~\ref{alg:djEncoding} outputs an array~\texttt{dStrings} of~$l$ pairwise distinct binary strings of length $m$.
\end{theorem}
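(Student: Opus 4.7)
The plan is to establish the three required properties of \texttt{dStrings} (cardinality $l$, entry length $m$, pairwise distinctness) by a loop-invariant argument over the outer \textbf{while} loop. The invariant I would maintain, call it $\mathcal{I}(i,i_\text{end})$, states: at the top of each outer iteration, (a) $|\texttt{dStrings}|=l$; (b) every entry of $\texttt{dStrings}$ is a binary string of length exactly $m$; (c) for all indices $r<i$, $\texttt{dStrings}[r]$ differs from every other entry of $\texttt{dStrings}$; and (d) the entries at positions in $(i_\text{end},l-1]$ are ``new'' strings appended during earlier iterations, each of which differs from every other entry in its leading $\ceil{\log l}+1$ bits. I would then prove the theorem by induction on outer iterations, verifying $\mathcal{I}(0,l-1)$ at initialization and establishing preservation after one full outer iteration.

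For the easy parts, the length and cardinality statements are bookkeeping. Initialization appends a $1$ to $\bfw$ to make its length $l\cdot m$ and cuts it into $l$ pieces of length $m$, giving (a) and (b). Each time a duplicate is found, line~\ref{line:deleteMatched} removes one entry and the subsequent append on line~\ref{line:newString} adds one; I would check that the appended string has length $(\ceil{\log l}+1)+\ceil{\log l}+(m-2\ceil{\log l}-1)=m$, which is well-defined because $m\geq\ceil{2\log l}+2$ forces $m-2\ceil{\log l}-1\geq 0$. Hence (a) and (b) are preserved. Also, decreasing $i_\text{end}$ on line~\ref{line:iendminus1} matches the fact that one original slot has been replaced by a new string living at a higher index, so (d) remains consistent.

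The core of the argument is preserving (c). Inside the inner loop, I would observe that the scan compares $\texttt{dStrings}[i]$ against each $\texttt{dStrings}[j]$ for $j\in[i+1,i_\text{end}]$; any matching $j$ is deleted and replaced by a string sitting beyond $i_\text{end}$. Upon exit of the inner loop, $\texttt{dStrings}[i]$ coincides with no entry in positions $(i,i_\text{end}]$, it is distinct from entries at positions $<i$ by the inductive hypothesis (c), and it is distinct from entries at positions $>i_\text{end}$ by (d) (since the new strings there differ from every other entry in their first $\ceil{\log l}+1$ bits, and in particular from $\texttt{dStrings}[i]$). Thus (c) extends to all indices $\leq i$, establishing $\mathcal{I}(i+1,i_\text{end}')$. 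When the outer loop exits with $i\geq i_\text{end}$, every original-slot entry has been processed and every new-slot entry satisfies (d), so the full array is pairwise distinct.

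The main obstacle is verifying Lemma~\ref{lemma:differetNewString}, i.e.\ that the string appended on line~\ref{line:newString} is well-formed and genuinely distinct from every existing entry. This reduces to showing that the ``$j$-th available slot'' search terminates with some $j'\leq 2l-2$, so that $\textsc{binary}(j',\ceil{\log l}+1)$ is definable. I would argue that at the moment of the search there are at most $l-1$ other entries in \texttt{dStrings}, hence at most $l-1$ ``unavailable'' prefixes; the Repeat block performs $j\leq l-1$ increments, each possibly skipping unavailable indices, so the total advance of $j'$ is bounded by $j+(l-1)\leq 2(l-1)<2^{\ceil{\log l}+1}$. This both guarantees existence of a valid $j'$ and confirms it fits in $\ceil{\log l}+1$ bits. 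By construction, the resulting new string's leading $\ceil{\log l}+1$ bits equal $\textsc{binary}(j',\ceil{\log l}+1)$, which was explicitly chosen to differ from the leading $\ceil{\log l}+1$ bits of every current entry; this gives distinctness and closes the induction.
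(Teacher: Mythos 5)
Your proof is correct and rests on the same two pillars as the paper's own argument: (i) the new-string distinctness lemma (Lemma~\ref{lemma:differetNewString}), which you re-derive via the bound $j'\le 2(l-1)<2^{\ceil{\log l}+1}$ exactly as the paper does, and (ii) the observation that any duplicate pair of surviving original segments would have been caught and eliminated by the exhaustive scan. Where you differ is the packaging: the paper argues by contradiction, taking a putative duplicate pair $(a,b)$ and splitting on whether $\texttt{dStrings}[b]$ is a new string, whereas you run a forward induction with an explicit four-part loop invariant. Your version buys some additional rigor: it verifies the cardinality and length-$m$ claims in the theorem statement (which the paper's proof silently omits, addressing only distinctness), and it makes explicit the index bookkeeping --- that positions at most $i_\text{end}$ always hold original segments while positions beyond it hold new strings --- on which the paper's two-case split implicitly relies. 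The paper's version is shorter because the contradiction framing lets it avoid stating an invariant at all; neither argument is more general than the other, being two presentations of the same underlying reasoning.
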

\begin{proof}
    Assume, for sake of contradiction, that there exist~$a,b\in[0,l-1]$ and~$a<b$ such that~$\texttt{dStrings}[a]=\texttt{dStrings}[b]$.
    There are two possible cases for~$\texttt{dStrings}[b]$.
    
    If~$\texttt{dStrings}[b]$ is a new string constructed in line~\ref{line:newString}, then it is distinct from every other elements on its left by Lemma~\ref{lemma:differetNewString}, a contradiction.
    If~$\texttt{dStrings}[b]$ is not a new string, then the~$\texttt{dStrings}[a]$ on its left is not as well.
    As such,~$\texttt{dStrings}[b]$ would have been deleted in line~\ref{line:deleteMatched} when~$i=a$ and~$j=b$, a contradiction.
\end{proof}

\subsection{Decoding}

We proceed to introduce the decoding procedure in Algorithm~\ref{alg:djDecoding}, which is essentially the inverse operation of the encoding process.
Given the array~\texttt{dStrings}, the decoding algorithm reads~$i,j'$ from the rightmost element if it is a constructed new string, i.e., if last bit of which is~$0$ (line~\ref{line:decodeWhile}).
Recall that a new string is created when two identical strings is found, with~$i$ being the index of the first one (the reference) and~$j'$ being the~\emph{alternative} index of the second (the referent).

Line~\ref{line:readi} reads the value of~$i$, and line~\ref{line:readjp} reads the value of~$j'$.
Recall that~$j'$ is the index of the~$j$-th available slots in a row.
Hence, the variable~$j$ is initially set to~$j'$, and then subtracted by the number of unavailable slots with indices less than~$j'$ to reaches the actual index of the referent (line~\ref{line:getj1}--\ref{line:getj2}).
Together,~$i,j$ enable the recovery of the referent, and the rightmost element is deleted (line~\ref{line:recovery}).
Once all new strings have been consumed in the aforementioned process, the decoding is concluded and~$\bfw$ is returned (line~\ref{line:removeLastBit}).

\section{Proof of Theorem~\ref{theorem:brc}}\label{appendix:proof-of-theorem}

The crux of proving Theorem~\ref{theorem:brc} is showing that the decoder is able to obtain the key-value store~\texttt{next} from unordered and partially missing fragments which result from breaking a codeword~$\bfc$ at~$t$ arbitrary places and hiding fragments whose total length is at most~$s$ bits.
That is, the Reed-Solomon decoding in line~\ref{line:decodeRS} of Algorithm~\ref{alg:Decode} concludes successfully---it is well known that this requires the number of erasures plus twice the number of errors to be at most~$4\alpha$ in the RS codeword
\begin{equation}\label{eq:rs-cw}
(\texttt{approxNext},\bfr_1,\ldots,\bfr_{4\alpha}).
\end{equation}

Recall that every~$\alpha$-BRC codeword~$\bfc$ is a concatenation of~$\cC_\text{MU}$ codewords, which start with~$\ceil{\log m}+1$ consecutive zeros (see Section~\ref{section:preliminaries}), and redundancy packets, which are free of zero runs of length~$\ceil{\log m}+1$ or more thanks to the~$\textsc{RLL}$ encoding.
Hence, every discernible~$\cC_\text{MU}$ codeword in~$\bfc$ does not overlap with redundancy packets, and the decoder does not confuse the two.

The following lemma counts the number of erasures in~\eqref{eq:rs-cw}, which equals to the number of redundancy strings that the decoder fails to obtain from the fragments.

\begin{lemma}\label{lemma:boundErasures}
    Let~$t_2$ be the number of breaks that fall in the redundancy region, or separate the redundancy regions from the information region, and let~$s_2$ be the number of missing bits that originally reside in the redundancy region.
    Then, the number of redundancy strings that the decoder fails to obtain is at most~$4[t_2+{s_2}/{(5m+\ceil{\log m}+15})]$.
\end{lemma}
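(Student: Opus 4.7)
The plan is to reduce the lemma to counting how many of the $\alpha$ blocks $B_i := \textsc{mu}(\bfu_i)\circ\textsc{rll}(\bfd_i)$, each of length $L := 5m+\ceil{\log m}+15$, fail to sit entirely inside a single confiscated fragment. First I will verify that whenever $B_i$ does lie within one confiscated fragment, lines~\ref{line:look4redundancy}--\ref{line:extractai} of Algorithm~\ref{alg:Decode} successfully extract all four redundancy strings $\bfr_{4i},\ldots,\bfr_{4i+3}$. This will use the mutually-uncorrelated property of the MU code together with the pairwise distinctness of $\bfu_0,\ldots,\bfu_{l-1}$ (Theorem~\ref{theorem:pairwise-distinct}), so that $\textsc{mu}(\bfp^{(i)})$ occurs exactly once as a substring of $\bfc$; the decoder therefore locates the marker, reads the $4m+11$ bits following it as $\bfm_i$, and inverts the RLL layer to obtain $\bfd_i$.

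Next I will classify every failed block $B_i$ into one of two mutually exclusive cases: (a) some break falls strictly inside $B_i$, or (b) $B_i$ lies entirely within a single concealed fragment. These cases are exhaustive, because if a concealed fragment covers only part of $B_i$, then its boundary supplies a break inside $B_i$ and the block instead belongs to case~(a). Since the $B_i$'s are disjoint intervals whose union is the redundancy region, each break lies in the interior of at most one $B_i$, so case~(a) contains at most $t_2$ blocks; the break separating the redundancy region from the information region is also counted in $t_2$ but lies at a block boundary and thus splits no $B_i$. Case~(b) contributes at most $s_2/L$ blocks, because each such block consumes a disjoint slab of $L$ bits drawn from the $s_2$ missing bits that originate in the redundancy region.

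Combining the two bounds gives at most $t_2 + s_2/L$ failed blocks, and multiplying by four (the number of redundancy strings carried by one block) yields the stated inequality $4\bigl[t_2+s_2/(5m+\ceil{\log m}+15)\bigr]$. The only delicate step I anticipate is the disjointness-and-exhaustiveness part of the case analysis: one must argue that a block partially intersecting a concealed fragment necessarily contains a break at that fragment's boundary, so that it falls into case~(a) and is not double-counted with case~(b). Once this observation is recorded, the rest is straightforward bookkeeping over disjoint intervals.
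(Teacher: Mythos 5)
Your proposal is correct and takes essentially the same approach as the paper: the paper's proof likewise classifies each lost redundancy packet (marker plus packet, i.e., your block $B_i$ of length $5m+\ceil{\log m}+15$) as either containing a break or residing wholly in a concealed fragment, bounds the two cases by $t_2$ and $s_2/(5m+\ceil{\log m}+15)$ respectively, and multiplies by four. Your added remarks on exhaustiveness of the case split and on the MU/RLL discernibility of intact blocks make explicit what the paper treats implicitly, but introduce no new idea.
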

\begin{proof}
The decoding algorithm may fail to obtain a redundancy packet due to exactly one of the following reasons.
\begin{enumerate}
    \item There exists a break either in the packet itself, or in its preceding marker.
    \item The packet, as well as its preceding marker, wholly resides in a missing fragment.
\end{enumerate}

Since~$t_2$ breaks occur in the redundancy region, there are at most~$t_2$ missing redundant packets due to the first reason.

Recall that a redundancy packet and a marker add up to~$5m+\ceil{\log m}+15$ bits.
Then, for a missing fragment~$\bff$ that resides in the redundancy region (if~$\bff$ cross both regions, we only consider the part in the redundancy region), at most
\small\begin{equation*}
   {\mid\bff\mid}/{(5m+\ceil{\log m}+15)}
\end{equation*}\normalsize
packets, together with their preceding markers, are lost due to the second reason.
Together, the number of missing redundant packets is no more than
\small\begin{align*}
    &\phantom{=}t_2+\sum_{\mbox{$\bff$ in redundancy region}}\frac{\mid\bff\mid}{5m+\ceil{\log m}+15}\\
    &\leq t_2+{s_2}/{(5m+\ceil{\log m}+15)}.
\end{align*}\normalsize
Finally, since one redundancy packet contains four redundancy strings, it follows that the decoder fails to obtain at most~$4(t_2+{s_2}/{(5m+\ceil{\log m}+15)})$ redundancy strings.
\end{proof}

We continue to count the number of errors in~$\eqref{eq:rs-cw}$, i.e., the number of entries in which~\texttt{approxNext} and~\texttt{next} differ.

\begin{lemma}\label{lemma:boundErrors}
Let~$t_1$ be the number of breaks that occur in the information region, and~$s_1$ be the number of missing bits that originally reside in the information region.
Then, the number of entries in which \texttt{approxNext} and~\texttt{next} differ is at most
\begin{equation}
    2\cdot t_1 + {s_1}/{(m+\ceil{\log m}+4)}.
\end{equation}
\end{lemma}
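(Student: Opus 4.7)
The plan is to count the entries on which $\texttt{approxNext}$ and $\texttt{next}$ differ by analyzing which ``next'' updates the decoder successfully makes. Let $L := m + \lceil \log m \rceil + 4$ denote the length of one $\cC_\text{MU}$ codeword, so the information region of $\bfc$ consists of the $l - \alpha$ codewords $\textsc{mu}(\bfu_\alpha), \ldots, \textsc{mu}(\bfu_{l-1})$ concatenated back-to-back. By the update rule in Algorithm~\ref{alg:Decode}, line~\ref{line:updateApproxNext}, the entry $\texttt{approxNext}[\bfu_i]$ (for $i \in [\alpha, l-2]$) is correctly set to $\bfu_{i+1}$ if and only if both $\textsc{mu}(\bfu_i)$ and $\textsc{mu}(\bfu_{i+1})$ lie fully within a single received fragment. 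I shall call such a consecutive pair \emph{good} and otherwise \emph{bad}; the quantity to bound is then the number of bad pairs.

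I would next classify each information-region MU codeword as \emph{intact} if it is fully contained in some received fragment, and \emph{lost} otherwise. The $t_1$ breaks inside the information region partition it into at most $t_1 + 1$ contiguous pieces, each lying entirely inside either a received or a missing fragment. If a received piece contains $n_j$ intact codewords, those codewords are necessarily consecutive in the information region, so the piece contributes exactly $\max(n_j - 1, 0)$ good pairs. Summing over received pieces yields $\text{(good pairs)} = I - k'$, where $I$ is the total number of intact codewords and $k' \leq t_1 + 1$ is the number of received pieces containing at least one intact codeword. Since there are $l - \alpha - 1$ consecutive non-marker pairs in total,
\begin{equation*}
\text{(bad pairs)} = (l - \alpha - 1) - (I - k') = (l - \alpha - I) + (k' - 1).
\end{equation*}

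It thus suffices to bound $(l - \alpha - I)$, the number of lost codewords, by $t_1 + s_1/L$. Each lost codeword falls into exactly one of two disjoint categories: either (a) a break position lies strictly interior to its bit range, so it is split across two fragments, or (b) it lies entirely inside a missing fragment. Category (a) contributes at most $t_1$ codewords since each break cuts at most one codeword; category (b) contributes at most $s_1 / L$ codewords since each uses $L$ consecutive missing bits. Combined with $k' - 1 \leq t_1$, this yields $\text{(bad pairs)} \leq 2 t_1 + s_1/L$, as required.

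The main subtlety will lie in making the classification of lost codewords disjoint and exhaustive: a codeword straddling a missing-fragment boundary should be charged to category (a) rather than (b), since missing-fragment boundaries are themselves break positions in the original breakage; also, when a received fragment crosses the boundary between the redundancy and information regions, the analysis must not treat that boundary as an interior break of the information region. Once these boundary conventions are pinned down, the rest is routine bookkeeping.
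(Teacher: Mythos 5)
Your proposal is correct and follows essentially the same route as the paper's proof: both reduce the differing entries to consecutive non-marker pairs the decoder fails to link, charge each information-region break at most twice (once for a codeword it splits, once for the gap it creates between received pieces), and charge missing bits at a rate of one lost codeword per $m+\ceil{\log m}+4$ bits, with codewords straddling a missing-fragment boundary attributed to breaks exactly as in the paper's ``$|\bff|/(m+\ceil{\log m}+4)-1$'' accounting. Your complementary counting of good pairs over pieces is just a more explicit bookkeeping of the paper's direct charging argument, yielding the same bound $2t_1+s_1/(m+\ceil{\log m}+4)$.
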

\begin{proof}
    Observe that the decoding algorithm may fail to find a non-marker (i.e., $\bfu_i$ for~$i\geq t$) in the information region due to exactly one of the following reasons.
    \begin{enumerate}
        \item There exists a break in it.
        \item It wholly resides in a missing fragment. 
    \end{enumerate}
    
    Notice that, failing to capture a non-marker~$\bfu_u$ affects at most two key-value pairs of~\texttt{next}, i.e.,~$\texttt{next}[\bfu_{u-1}]$ and~$\texttt{next}[\bfu_{u+1}]$, and hence~$\texttt{next}$ and~$\texttt{approxNext}$ differ in at most~$2t_1$ positions due to the first reason.

    Recall that the code length of~$\cC_\text{MU}$ is~$m+\ceil{\log m}+4$.
    Hence, for a missing fragment~$\bff$ that resides in the information region (if~$\bff$ cross both regions, we only consider the part in the information region), there are at most
    \small\begin{equation*}
        {\mid\bff\mid}/{(m+\ceil{\log m}+4)}
    \end{equation*}\normalsize
    consecutive non-markers being lost due to the second reason.
    This leads to at most
    \small\begin{equation*}
        {\mid\bff\mid}/{(m+\ceil{\log m}+4)}-1
    \end{equation*}\normalsize
    different entries in~\texttt{approxNext} in addition to the ones caused by breaks. 
    Together, the number of entries that~\texttt{approxNext} and~\texttt{next} differ is at most
    \small\begin{align*}
        &2 t_1 + \sum_{\mbox{$\bff~$ in information region}}{\frac{\mid \bff\mid}{m+\ceil{\log m} +4}-1}\\
        \leq& 2 t_1 + {s_1}/{(m+\ceil{\log m} +4)}.\qedhere
    \end{align*}\normalsize
\end{proof}
Now, by Lemma~\ref{lemma:boundErasures} and Lemma~\ref{lemma:boundErrors}, the sum of the number of errors and twice the number of erasures in~\eqref{eq:rs-cw} is no more than
\small
\begin{align*}
    &\phantom{=}4(t_1+t_2)+{{4s_2}/{(5m+\ceil{\log m}+15})}+{{2s_1}/{(m+\ceil{\log m} +4})}\\
    &\leq 4t +{2s_2}{(m+\ceil{\log m} +4)}+{2s_1}/{(m+\ceil{\log m} +4)}\\
    &\leq 4 t + {2s}/{(m+\ceil{\log m} +4)} \leq 4 \alpha,
\end{align*}
\normalsize
where the last inequality follows from \eqref{eq:st-condition}.
The proof is concluded since line~\ref{line:decodeRS} outputs the correct key-value store~$\texttt{next}$, given that a~$(2^m + 4\alpha, 2^m)$ Reed-Solomon code can simultaneously correct~$x$ errors and~$y$ erasures provided that~$2x + y\le 4\alpha$.


\section{Simulation Results}\label{appendix:simulation}
This section presents simulation results on the success rate of fingerprint recovery using $\alpha$-BRC across three models.
\begin{table*}[ht]
\centering
\resizebox{.95\textwidth}{!}{\begin{tabular}{|ccccccccccccccccc|}
\hline
\multicolumn{1}{|c|}{}        & \multicolumn{4}{c|}{1-BRC, 0.613 mm/bit}                      & \multicolumn{4}{c|}{2-BRC, 0.485 mm/bit}                      & \multicolumn{4}{c|}{3-BRC, 0.401 mm/bit}                       & \multicolumn{4}{c|}{4-BRC, 0.342 mm/bit}  \\ \hline
\multicolumn{1}{|c|}{\diagbox[height=1.6em]{$\beta$}{$\rho$}} & 0\%      & 25\%     & 50\%     & \multicolumn{1}{c|}{75\%}    & 0\%      & 25\%     & 50\%     & \multicolumn{1}{c|}{75\%}    & 0\%      & 25\%     & 50\%     & \multicolumn{1}{c|}{75\%}     & 0\%      & 25\%     & 50\%     & 75\%     \\ \hline
\multicolumn{1}{|c|}{10}      & 81.25\%  & 45.41\%  & 13.87\%  & \multicolumn{1}{c|}{0.78\%}  & 100.00\% & 83.15\%  & 43.24\%  & \multicolumn{1}{c|}{6.67\%}  & 100.00\% & 97.27\%  & 76.71\%  & \multicolumn{1}{c|}{21.07\%}  & 100.00\% & 99.54\%  & 89.67\%  & 37.26\%  \\
\multicolumn{1}{|c|}{20}      & 60.16\%  & 25.98\%  & 5.10\%   & \multicolumn{1}{c|}{0.17\%}  & 99.22\%  & 85.82\%  & 46.24\%  & \multicolumn{1}{c|}{4.08\%}  & 100.00\% & 99.71\%  & 89.36\%  & \multicolumn{1}{c|}{24.80\%}  & 100.00\% & 99.90\%  & 97.05\%  & 48.36\%  \\
\multicolumn{1}{|c|}{30}      & 32.03\%  & 11.06\%  & 1.27\%   & \multicolumn{1}{c|}{0.00\%}  & 100.00\% & 85.35\%  & 40.38\%  & \multicolumn{1}{c|}{2.66\%}  & 100.00\% & 99.68\%  & 90.11\%  & \multicolumn{1}{c|}{24.41\%}  & 100.00\% & 100.00\% & 98.56\%  & 55.25\%  \\
\multicolumn{1}{|c|}{40}      & 10.94\%  & 3.34\%   & 0.56\%   & \multicolumn{1}{c|}{0.00\%}  & 96.09\%  & 75.46\%  & 32.71\%  & \multicolumn{1}{c|}{1.64\%}  & 100.00\% & 99.49\%  & 88.13\%  & \multicolumn{1}{c|}{23.97\%}  & 100.00\% & 99.98\%  & 99.00\%  & 57.52\%  \\
\multicolumn{1}{|c|}{50}      & 5.47\%   & 1.17\%   & 0.07\%   & \multicolumn{1}{c|}{0.00\%}  & 86.72\%  & 61.62\%  & 20.43\%  & \multicolumn{1}{c|}{0.32\%}  & 100.00\% & 99.58\%  & 88.11\%  & \multicolumn{1}{c|}{18.99\%}  & 100.00\% & 100.00\% & 99.29\%  & 58.30\%  \\
\multicolumn{1}{|c|}{60}      & 0.00\%   & 0.00\%   & 0.00\%   & \multicolumn{1}{c|}{0.00\%}  & 78.91\%  & 43.09\%  & 10.52\%  & \multicolumn{1}{c|}{0.27\%}  & 100.00\% & 99.44\%  & 83.54\%  & \multicolumn{1}{c|}{14.77\%}  & 100.00\% & 99.98\%  & 98.80\%  & 53.00\%  \\
\multicolumn{1}{|c|}{70}      & 0.00\%   & 0.00\%   & 0.00\%   & \multicolumn{1}{c|}{0.00\%}  & 54.69\%  & 26.05\%  & 4.83\%   & \multicolumn{1}{c|}{0.20\%}  & 100.00\% & 98.90\%  & 79.42\%  & \multicolumn{1}{c|}{12.13\%}  & 100.00\% & 100.00\% & 99.02\%  & 50.34\%  \\
\multicolumn{1}{|c|}{80}      & 0.00\%   & 0.00\%   & 0.00\%   & \multicolumn{1}{c|}{0.00\%}  & 50.78\%  & 20.95\%  & 3.56\%   & \multicolumn{1}{c|}{0.05\%}  & 100.00\% & 97.68\%  & 71.78\%  & \multicolumn{1}{c|}{8.11\%}   & 100.00\% & 99.98\%  & 98.17\%  & 46.80\%  \\
\multicolumn{1}{|c|}{90}      & 0.00\%   & 0.00\%   & 0.00\%   & \multicolumn{1}{c|}{0.00\%}  & 37.50\%  & 13.04\%  & 2.17\%   & \multicolumn{1}{c|}{0.02\%}  & 100.00\% & 95.97\%  & 64.60\%  & \multicolumn{1}{c|}{5.27\%}   & 100.00\% & 99.98\%  & 98.24\%  & 43.12\%  \\
\multicolumn{1}{|c|}{100}     & 0.00\%   & 0.00\%   & 0.00\%   & \multicolumn{1}{c|}{0.00\%}  & 14.06\%  & 4.25\%   & 0.29\%   & \multicolumn{1}{c|}{0.00\%}  & 100.00\% & 90.70\%  & 53.10\%  & \multicolumn{1}{c|}{3.64\%}   & 100.00\% & 99.93\%  & 96.75\%  & 33.59\%  \\ \hline
\multicolumn{1}{|c|}{}        & \multicolumn{4}{c|}{5-BRC, 0.298 mm/bit}                      & \multicolumn{4}{c|}{6-BRC, 0.264 mm/bit}                      & \multicolumn{4}{c|}{7-BRC, 0.237 mm/bit}                       & \multicolumn{4}{c|}{8-BRC, 0.215 mm/bit}  \\ \hline
\multicolumn{1}{|c|}{\diagbox[height=1.6em]{$\beta$}{$\rho$}} & 0\%      & 25\%     & 50\%     & \multicolumn{1}{c|}{75\%}    & 0\%      & 25\%     & 50\%     & \multicolumn{1}{c|}{75\%}    & 0\%      & 25\%     & 50\%     & \multicolumn{1}{c|}{75\%}     & 0\%      & 25\%     & 50\%     & 75\%     \\ \hline
\multicolumn{1}{|c|}{10}      & 100.00\% & 99.80\%  & 96.00\%  & \multicolumn{1}{c|}{56.30\%} & 100.00\% & 100.00\% & 97.63\%  & \multicolumn{1}{c|}{66.60\%} & 100.00\% & 99.95\%  & 99.05\%  & \multicolumn{1}{c|}{74.51\%}  & 100.00\% & 100.00\% & 99.61\%  & 81.74\%  \\
\multicolumn{1}{|c|}{20}      & 100.00\% & 100.00\% & 99.51\%  & \multicolumn{1}{c|}{74.41\%} & 100.00\% & 100.00\% & 99.73\%  & \multicolumn{1}{c|}{84.30\%} & 100.00\% & 100.00\% & 99.95\%  & \multicolumn{1}{c|}{92.19\%}  & 100.00\% & 100.00\% & 100.00\% & 95.95\%  \\
\multicolumn{1}{|c|}{30}      & 100.00\% & 100.00\% & 99.93\%  & \multicolumn{1}{c|}{82.54\%} & 100.00\% & 100.00\% & 99.98\%  & \multicolumn{1}{c|}{93.16\%} & 100.00\% & 100.00\% & 100.00\% & \multicolumn{1}{c|}{97.53\%}  & 100.00\% & 100.00\% & 100.00\% & 99.17\%  \\
\multicolumn{1}{|c|}{40}      & 100.00\% & 100.00\% & 99.98\%  & \multicolumn{1}{c|}{87.52\%} & 100.00\% & 100.00\% & 100.00\% & \multicolumn{1}{c|}{96.29\%} & 100.00\% & 100.00\% & 100.00\% & \multicolumn{1}{c|}{98.58\%}  & 100.00\% & 100.00\% & 100.00\% & 99.80\%  \\
\multicolumn{1}{|c|}{50}      & 100.00\% & 100.00\% & 99.93\%  & \multicolumn{1}{c|}{88.94\%} & 100.00\% & 100.00\% & 100.00\% & \multicolumn{1}{c|}{97.53\%} & 100.00\% & 100.00\% & 100.00\% & \multicolumn{1}{c|}{99.49\%}  & 100.00\% & 100.00\% & 100.00\% & 99.95\%  \\
\multicolumn{1}{|c|}{60}      & 100.00\% & 100.00\% & 99.98\%  & \multicolumn{1}{c|}{87.99\%} & 100.00\% & 100.00\% & 100.00\% & \multicolumn{1}{c|}{98.24\%} & 100.00\% & 100.00\% & 100.00\% & \multicolumn{1}{c|}{99.71\%}  & 100.00\% & 100.00\% & 100.00\% & 99.95\%  \\
\multicolumn{1}{|c|}{70}      & 100.00\% & 100.00\% & 100.00\% & \multicolumn{1}{c|}{90.04\%} & 100.00\% & 100.00\% & 100.00\% & \multicolumn{1}{c|}{98.95\%} & 100.00\% & 100.00\% & 100.00\% & \multicolumn{1}{c|}{99.95\%}  & 100.00\% & 100.00\% & 100.00\% & 100.00\% \\
\multicolumn{1}{|c|}{80}      & 100.00\% & 100.00\% & 100.00\% & \multicolumn{1}{c|}{89.70\%} & 100.00\% & 100.00\% & 100.00\% & \multicolumn{1}{c|}{99.19\%} & 100.00\% & 100.00\% & 100.00\% & \multicolumn{1}{c|}{99.95\%}  & 100.00\% & 100.00\% & 100.00\% & 99.98\%  \\
\multicolumn{1}{|c|}{90}      & 100.00\% & 100.00\% & 99.98\%  & \multicolumn{1}{c|}{87.94\%} & 100.00\% & 100.00\% & 100.00\% & \multicolumn{1}{c|}{98.80\%} & 100.00\% & 100.00\% & 100.00\% & \multicolumn{1}{c|}{99.88\%}  & 100.00\% & 100.00\% & 100.00\% & 100.00\% \\
\multicolumn{1}{|c|}{100}     & 100.00\% & 100.00\% & 99.93\%  & \multicolumn{1}{c|}{85.28\%} & 100.00\% & 100.00\% & 100.00\% & \multicolumn{1}{c|}{99.07\%} & 100.00\% & 100.00\% & 100.00\% & \multicolumn{1}{c|}{100.00\%} & 100.00\% & 100.00\% & 100.00\% & 100.00\% \\ \hline
\end{tabular}
\caption{Simulation results of the FMDA Glock Frame.}
\label{tab:simulation-glock}
\end{table*}

\begin{table*}[ht]
\centering
\resizebox{.95\textwidth}{!}{\begin{tabular}{|ccccccccccccccccc|}
\hline
\multicolumn{1}{|c|}{}        & \multicolumn{4}{c|}{1-BRC, 0.653 mm/bit}                      & \multicolumn{4}{c|}{2-BRC, 0.5168 mm/bit}                     & \multicolumn{4}{c|}{3-BRC, 0.427 mm/bit}                       & \multicolumn{4}{c|}{4-BRC, 0.364 mm/bit}  \\ \hline
\multicolumn{1}{|c|}{\diagbox[height=1.6em]{$\beta$}{$\rho$}} & 0\%      & 25\%     & 50\%     & \multicolumn{1}{c|}{75\%}    & 0\%      & 25\%     & 50\%     & \multicolumn{1}{c|}{75\%}    & 0\%      & 25\%     & 50\%     & \multicolumn{1}{c|}{75\%}     & 0\%      & 25\%     & 50\%     & 75\%     \\ \hline
\multicolumn{1}{|c|}{10}      & 80.47\%  & 42.77\%  & 10.55\%  & \multicolumn{1}{c|}{0.44\%}  & 100.00\% & 89.14\%  & 53.91\%  & \multicolumn{1}{c|}{8.94\%}  & 100.00\% & 97.19\%  & 78.37\%  & \multicolumn{1}{c|}{23.56\%}  & 100.00\% & 99.10\%  & 88.96\%  & 41.82\%  \\
\multicolumn{1}{|c|}{20}      & 67.97\%  & 31.69\%  & 6.71\%   & \multicolumn{1}{c|}{0.20\%}  & 100.00\% & 95.39\%  & 63.92\%  & \multicolumn{1}{c|}{8.52\%}  & 100.00\% & 99.73\%  & 92.99\%  & \multicolumn{1}{c|}{36.82\%}  & 100.00\% & 99.98\%  & 98.29\%  & 60.50\%  \\
\multicolumn{1}{|c|}{30}      & 44.53\%  & 16.70\%  & 2.10\%   & \multicolumn{1}{c|}{0.02\%}  & 100.00\% & 93.92\%  & 59.42\%  & \multicolumn{1}{c|}{5.30\%}  & 100.00\% & 99.95\%  & 95.92\%  & \multicolumn{1}{c|}{36.35\%}  & 100.00\% & 100.00\% & 99.44\%  & 68.36\%  \\
\multicolumn{1}{|c|}{40}      & 30.47\%  & 10.55\%  & 1.20\%   & \multicolumn{1}{c|}{0.00\%}  & 99.22\%  & 91.26\%  & 50.76\%  & \multicolumn{1}{c|}{3.71\%}  & 100.00\% & 100.00\% & 96.95\%  & \multicolumn{1}{c|}{36.38\%}  & 100.00\% & 100.00\% & 99.85\%  & 75.17\%  \\
\multicolumn{1}{|c|}{50}      & 15.62\%  & 4.88\%   & 0.44\%   & \multicolumn{1}{c|}{0.02\%}  & 98.44\%  & 85.33\%  & 40.62\%  & \multicolumn{1}{c|}{1.93\%}  & 100.00\% & 99.95\%  & 95.53\%  & \multicolumn{1}{c|}{31.62\%}  & 100.00\% & 100.00\% & 99.95\%  & 73.32\%  \\
\multicolumn{1}{|c|}{60}      & 7.81\%   & 1.49\%   & 0.10\%   & \multicolumn{1}{c|}{0.00\%}  & 96.09\%  & 78.15\%  & 30.62\%  & \multicolumn{1}{c|}{1.15\%}  & 100.00\% & 99.95\%  & 96.39\%  & \multicolumn{1}{c|}{31.05\%}  & 100.00\% & 100.00\% & 99.78\%  & 72.36\%  \\
\multicolumn{1}{|c|}{70}      & 2.34\%   & 0.51\%   & 0.00\%   & \multicolumn{1}{c|}{0.00\%}  & 95.31\%  & 69.09\%  & 21.56\%  & \multicolumn{1}{c|}{0.29\%}  & 100.00\% & 100.00\% & 94.78\%  & \multicolumn{1}{c|}{26.10\%}  & 100.00\% & 100.00\% & 99.83\%  & 70.70\%  \\
\multicolumn{1}{|c|}{80}      & 0.00\%   & 0.00\%   & 0.00\%   & \multicolumn{1}{c|}{0.00\%}  & 88.28\%  & 55.64\%  & 13.94\%  & \multicolumn{1}{c|}{0.27\%}  & 100.00\% & 99.98\%  & 91.85\%  & \multicolumn{1}{c|}{20.14\%}  & 100.00\% & 100.00\% & 99.93\%  & 67.07\%  \\
\multicolumn{1}{|c|}{90}      & 0.00\%   & 0.00\%   & 0.00\%   & \multicolumn{1}{c|}{0.00\%}  & 75.78\%  & 42.65\%  & 7.89\%   & \multicolumn{1}{c|}{0.00\%}  & 100.00\% & 99.63\%  & 88.13\%  & \multicolumn{1}{c|}{15.84\%}  & 100.00\% & 100.00\% & 99.68\%  & 63.35\%  \\
\multicolumn{1}{|c|}{100}     & 0.00\%   & 0.00\%   & 0.00\%   & \multicolumn{1}{c|}{0.00\%}  & 67.97\%  & 31.10\%  & 4.54\%   & \multicolumn{1}{c|}{0.02\%}  & 100.00\% & 99.63\%  & 84.64\%  & \multicolumn{1}{c|}{11.47\%}  & 100.00\% & 100.00\% & 99.78\%  & 61.60\%  \\ \hline
\multicolumn{1}{|c|}{}        & \multicolumn{4}{c|}{5-BRC, 0.318 mm/bit}                      & \multicolumn{4}{c|}{6-BRC, 0.281 mm/bit}                      & \multicolumn{4}{c|}{7-BRC, 0.253 mm/bit}                       & \multicolumn{4}{c|}{8-BRC, 0.229 mm/bit}  \\ \hline
\multicolumn{1}{|c|}{\diagbox[height=1.6em]{$\beta$}{$\rho$}} & 0\%      & 25\%     & 50\%     & \multicolumn{1}{c|}{75\%}    & 0\%      & 25\%     & 50\%     & \multicolumn{1}{c|}{75\%}    & 0\%      & 25\%     & 50\%     & \multicolumn{1}{c|}{75\%}     & 0\%      & 25\%     & 50\%     & 75\%     \\ \hline
\multicolumn{1}{|c|}{10}      & 100.00\% & 99.39\%  & 93.92\%  & \multicolumn{1}{c|}{53.44\%} & 100.00\% & 99.66\%  & 96.75\%  & \multicolumn{1}{c|}{62.30\%} & 100.00\% & 99.85\%  & 97.63\%  & \multicolumn{1}{c|}{70.87\%}  & 100.00\% & 99.61\%  & 98.34\%  & 76.86\%  \\
\multicolumn{1}{|c|}{20}      & 100.00\% & 99.98\%  & 99.68\%  & \multicolumn{1}{c|}{78.27\%} & 100.00\% & 100.00\% & 99.98\%  & \multicolumn{1}{c|}{87.48\%} & 100.00\% & 100.00\% & 99.98\%  & \multicolumn{1}{c|}{92.41\%}  & 100.00\% & 100.00\% & 100.00\% & 95.48\%  \\
\multicolumn{1}{|c|}{30}      & 100.00\% & 100.00\% & 99.95\%  & \multicolumn{1}{c|}{86.04\%} & 100.00\% & 100.00\% & 99.98\%  & \multicolumn{1}{c|}{94.41\%} & 100.00\% & 100.00\% & 100.00\% & \multicolumn{1}{c|}{97.49\%}  & 100.00\% & 100.00\% & 100.00\% & 98.75\%  \\
\multicolumn{1}{|c|}{40}      & 100.00\% & 100.00\% & 100.00\% & \multicolumn{1}{c|}{90.58\%} & 100.00\% & 100.00\% & 100.00\% & \multicolumn{1}{c|}{97.53\%} & 100.00\% & 100.00\% & 100.00\% & \multicolumn{1}{c|}{99.54\%}  & 100.00\% & 100.00\% & 100.00\% & 99.83\%  \\
\multicolumn{1}{|c|}{50}      & 100.00\% & 100.00\% & 100.00\% & \multicolumn{1}{c|}{91.80\%} & 100.00\% & 100.00\% & 100.00\% & \multicolumn{1}{c|}{98.12\%} & 100.00\% & 100.00\% & 100.00\% & \multicolumn{1}{c|}{99.66\%}  & 100.00\% & 100.00\% & 100.00\% & 99.88\%  \\
\multicolumn{1}{|c|}{60}      & 100.00\% & 100.00\% & 100.00\% & \multicolumn{1}{c|}{93.77\%} & 100.00\% & 100.00\% & 100.00\% & \multicolumn{1}{c|}{98.54\%} & 100.00\% & 100.00\% & 100.00\% & \multicolumn{1}{c|}{99.90\%}  & 100.00\% & 100.00\% & 100.00\% & 99.98\%  \\
\multicolumn{1}{|c|}{70}      & 100.00\% & 100.00\% & 100.00\% & \multicolumn{1}{c|}{92.77\%} & 100.00\% & 100.00\% & 100.00\% & \multicolumn{1}{c|}{98.97\%} & 100.00\% & 100.00\% & 100.00\% & \multicolumn{1}{c|}{99.80\%}  & 100.00\% & 100.00\% & 100.00\% & 99.98\%  \\
\multicolumn{1}{|c|}{80}      & 100.00\% & 100.00\% & 100.00\% & \multicolumn{1}{c|}{93.48\%} & 100.00\% & 100.00\% & 100.00\% & \multicolumn{1}{c|}{99.10\%} & 100.00\% & 100.00\% & 100.00\% & \multicolumn{1}{c|}{99.88\%}  & 100.00\% & 100.00\% & 100.00\% & 100.00\% \\
\multicolumn{1}{|c|}{90}      & 100.00\% & 100.00\% & 100.00\% & \multicolumn{1}{c|}{93.31\%} & 100.00\% & 100.00\% & 100.00\% & \multicolumn{1}{c|}{99.44\%} & 100.00\% & 100.00\% & 100.00\% & \multicolumn{1}{c|}{100.00\%} & 100.00\% & 100.00\% & 100.00\% & 100.00\% \\
\multicolumn{1}{|c|}{100}     & 100.00\% & 100.00\% & 100.00\% & \multicolumn{1}{c|}{93.09\%} & 100.00\% & 100.00\% & 100.00\% & \multicolumn{1}{c|}{99.24\%} & 100.00\% & 100.00\% & 100.00\% & \multicolumn{1}{c|}{99.98\%}  & 100.00\% & 100.00\% & 100.00\% & 100.00\% \\ \hline
\end{tabular}
\caption{Simulation results of AR-15 lower receiver.}
\label{tab:simulation-ar}
\end{table*}

\begin{table*}[ht]
\centering
\resizebox{.95\textwidth}{!}{\begin{tabular}{|ccccccccccccccccc|}
\hline
\multicolumn{1}{|c|}{}    & \multicolumn{4}{c|}{1-BRC, 0.161 mm/bit}                      & \multicolumn{4}{c|}{2-BRC, 0.127 mm/bit}                      & \multicolumn{4}{c|}{3-BRC, 0.105 mm/bit}                      & \multicolumn{4}{c|}{4-BRC, 0.090 mm/bit}  \\ \hline
\multicolumn{1}{|c|}{\diagbox[height=1.6em]{$\beta$}{$\rho$}}    & 0\%      & 25\%     & 50\%     & \multicolumn{1}{c|}{75\%}    & 0\%      & 25\%     & 50\%     & \multicolumn{1}{c|}{75\%}    & 0\%      & 25\%     & 50\%     & \multicolumn{1}{c|}{75\%}    & 0\%      & 25\%     & 50\%     & 75\%     \\ \hline
\multicolumn{1}{|c|}{10}  & 100.00\% & 78.34\%  & 53.69\%  & \multicolumn{1}{c|}{20.41\%} & 100.00\% & 92.63\%  & 74.68\%  & \multicolumn{1}{c|}{39.60\%} & 100.00\% & 94.19\%  & 77.69\%  & \multicolumn{1}{c|}{44.75\%} & 100.00\% & 95.65\%  & 81.69\%  & 49.51\%  \\
\multicolumn{1}{|c|}{20}  & 100.00\% & 79.91\%  & 56.13\%  & \multicolumn{1}{c|}{19.97\%} & 100.00\% & 96.78\%  & 81.64\%  & \multicolumn{1}{c|}{43.51\%} & 100.00\% & 97.80\%  & 88.70\%  & \multicolumn{1}{c|}{57.37\%} & 100.00\% & 98.27\%  & 90.84\%  & 62.26\%  \\
\multicolumn{1}{|c|}{30}  & 100.00\% & 83.23\%  & 57.28\%  & \multicolumn{1}{c|}{18.53\%} & 100.00\% & 98.85\%  & 89.18\%  & \multicolumn{1}{c|}{48.44\%} & 100.00\% & 99.29\%  & 93.87\%  & \multicolumn{1}{c|}{66.09\%} & 100.00\% & 99.76\%  & 95.70\%  & 73.36\%  \\
\multicolumn{1}{|c|}{40}  & 100.00\% & 84.23\%  & 57.32\%  & \multicolumn{1}{c|}{19.60\%} & 100.00\% & 99.41\%  & 91.43\%  & \multicolumn{1}{c|}{49.66\%} & 100.00\% & 99.83\%  & 95.95\%  & \multicolumn{1}{c|}{70.80\%} & 100.00\% & 99.93\%  & 97.53\%  & 79.71\%  \\
\multicolumn{1}{|c|}{50}  & 99.22\%  & 84.03\%  & 56.35\%  & \multicolumn{1}{c|}{16.21\%} & 100.00\% & 99.80\%  & 92.75\%  & \multicolumn{1}{c|}{50.88\%} & 100.00\% & 99.98\%  & 97.78\%  & \multicolumn{1}{c|}{74.83\%} & 100.00\% & 99.93\%  & 98.68\%  & 83.67\%  \\
\multicolumn{1}{|c|}{60}  & 98.44\%  & 83.98\%  & 53.76\%  & \multicolumn{1}{c|}{14.60\%} & 100.00\% & 99.76\%  & 93.02\%  & \multicolumn{1}{c|}{51.44\%} & 100.00\% & 99.93\%  & 98.41\%  & \multicolumn{1}{c|}{77.10\%} & 100.00\% & 99.98\%  & 99.19\%  & 86.33\%  \\
\multicolumn{1}{|c|}{70}  & 96.09\%  & 79.42\%  & 48.00\%  & \multicolumn{1}{c|}{11.52\%} & 100.00\% & 99.83\%  & 93.41\%  & \multicolumn{1}{c|}{49.95\%} & 100.00\% & 99.98\%  & 98.90\%  & \multicolumn{1}{c|}{79.66\%} & 100.00\% & 100.00\% & 99.61\%  & 88.92\%  \\
\multicolumn{1}{|c|}{80}  & 96.09\%  & 78.74\%  & 50.46\%  & \multicolumn{1}{c|}{10.57\%} & 100.00\% & 99.63\%  & 92.85\%  & \multicolumn{1}{c|}{47.73\%} & 100.00\% & 100.00\% & 98.97\%  & \multicolumn{1}{c|}{79.17\%} & 100.00\% & 100.00\% & 99.83\%  & 91.33\%  \\
\multicolumn{1}{|c|}{90}  & 92.97\%  & 76.07\%  & 43.09\%  & \multicolumn{1}{c|}{8.79\%}  & 100.00\% & 99.58\%  & 92.53\%  & \multicolumn{1}{c|}{46.17\%} & 100.00\% & 100.00\% & 99.39\%  & \multicolumn{1}{c|}{81.54\%} & 100.00\% & 100.00\% & 99.90\%  & 92.72\%  \\
\multicolumn{1}{|c|}{100} & 91.41\%  & 70.19\%  & 37.35\%  & \multicolumn{1}{c|}{6.54\%}  & 100.00\% & 99.41\%  & 91.14\%  & \multicolumn{1}{c|}{43.51\%} & 100.00\% & 100.00\% & 99.32\%  & \multicolumn{1}{c|}{80.52\%} & 100.00\% & 100.00\% & 99.95\%  & 92.65\%  \\ \hline
\multicolumn{1}{|c|}{}    & \multicolumn{4}{c|}{5-BRC, 0.078 mm/bit}                      & \multicolumn{4}{c|}{6-BRC, 0.062 mm/bit}                      & \multicolumn{4}{c|}{7-BRC, 0.062 mm/bit}                      & \multicolumn{4}{c|}{8-BRC, 0.056 mm/bit}  \\ \hline
\multicolumn{1}{|c|}{\diagbox[height=1.6em]{$\beta$}{$\rho$}}    & 0\%      & 25\%     & 50\%     & \multicolumn{1}{c|}{75\%}    & 0\%      & 25\%     & 50\%     & \multicolumn{1}{c|}{75\%}    & 0\%      & 25\%     & 50\%     & \multicolumn{1}{c|}{75\%}    & 0\%      & 25\%     & 50\%     & 75\%     \\ \hline
\multicolumn{1}{|c|}{10}  & 100.00\% & 97.31\%  & 87.40\%  & \multicolumn{1}{c|}{58.37\%} & 100.00\% & 99.73\%  & 95.51\%  & \multicolumn{1}{c|}{71.73\%} & 100.00\% & 99.98\%  & 98.22\%  & \multicolumn{1}{c|}{82.42\%} & 100.00\% & 99.98\%  & 99.46\%  & 89.50\%  \\
\multicolumn{1}{|c|}{20}  & 100.00\% & 99.49\%  & 95.39\%  & \multicolumn{1}{c|}{72.34\%} & 100.00\% & 100.00\% & 99.15\%  & \multicolumn{1}{c|}{86.82\%} & 100.00\% & 100.00\% & 99.90\%  & \multicolumn{1}{c|}{93.21\%} & 100.00\% & 100.00\% & 99.98\%  & 96.36\%  \\
\multicolumn{1}{|c|}{30}  & 100.00\% & 99.80\%  & 97.88\%  & \multicolumn{1}{c|}{81.01\%} & 100.00\% & 99.98\%  & 99.61\%  & \multicolumn{1}{c|}{90.84\%} & 100.00\% & 100.00\% & 100.00\% & \multicolumn{1}{c|}{97.05\%} & 100.00\% & 100.00\% & 100.00\% & 98.44\%  \\
\multicolumn{1}{|c|}{40}  & 100.00\% & 99.95\%  & 98.97\%  & \multicolumn{1}{c|}{86.82\%} & 100.00\% & 100.00\% & 99.80\%  & \multicolumn{1}{c|}{94.87\%} & 100.00\% & 100.00\% & 99.98\%  & \multicolumn{1}{c|}{98.56\%} & 100.00\% & 100.00\% & 100.00\% & 99.46\%  \\
\multicolumn{1}{|c|}{50}  & 100.00\% & 100.00\% & 99.37\%  & \multicolumn{1}{c|}{90.80\%} & 100.00\% & 100.00\% & 99.88\%  & \multicolumn{1}{c|}{96.02\%} & 100.00\% & 100.00\% & 99.98\%  & \multicolumn{1}{c|}{99.10\%} & 100.00\% & 100.00\% & 100.00\% & 99.78\%  \\
\multicolumn{1}{|c|}{60}  & 100.00\% & 100.00\% & 99.73\%  & \multicolumn{1}{c|}{93.92\%} & 100.00\% & 100.00\% & 99.95\%  & \multicolumn{1}{c|}{97.73\%} & 100.00\% & 100.00\% & 100.00\% & \multicolumn{1}{c|}{99.41\%} & 100.00\% & 100.00\% & 100.00\% & 99.90\%  \\
\multicolumn{1}{|c|}{70}  & 100.00\% & 100.00\% & 99.83\%  & \multicolumn{1}{c|}{95.61\%} & 100.00\% & 100.00\% & 99.98\%  & \multicolumn{1}{c|}{98.39\%} & 100.00\% & 100.00\% & 100.00\% & \multicolumn{1}{c|}{99.46\%} & 100.00\% & 100.00\% & 100.00\% & 99.98\%  \\
\multicolumn{1}{|c|}{80}  & 100.00\% & 100.00\% & 99.95\%  & \multicolumn{1}{c|}{95.95\%} & 100.00\% & 100.00\% & 100.00\% & \multicolumn{1}{c|}{98.97\%} & 100.00\% & 100.00\% & 100.00\% & \multicolumn{1}{c|}{99.78\%} & 100.00\% & 100.00\% & 100.00\% & 99.93\%  \\
\multicolumn{1}{|c|}{90}  & 100.00\% & 100.00\% & 100.00\% & \multicolumn{1}{c|}{97.41\%} & 100.00\% & 100.00\% & 100.00\% & \multicolumn{1}{c|}{99.19\%} & 100.00\% & 100.00\% & 100.00\% & \multicolumn{1}{c|}{99.88\%} & 100.00\% & 100.00\% & 100.00\% & 100.00\% \\
\multicolumn{1}{|c|}{100} & 100.00\% & 100.00\% & 99.95\%  & \multicolumn{1}{c|}{97.63\%} & 100.00\% & 100.00\% & 100.00\% & \multicolumn{1}{c|}{99.44\%} & 100.00\% & 100.00\% & 100.00\% & \multicolumn{1}{c|}{99.90\%} & 100.00\% & 100.00\% & 100.00\% & 100.00\% \\ \hline
\end{tabular}
\caption{Simulation results of 3DBenchy.}
\label{tab:simulation-benchy}
\end{table*}

\end{document}